\newtheorem{observation}{Observation}[section]
 \newtheorem{theorem}{Theorem}[section]
 \newtheorem{lemma}[theorem]{Lemma}
\def\GrabProofArgument[#1]{ #1: \egroup\ignorespaces}
\def\proof{\noindent\textbf\bgroup Proof%
	\@ifnextchar[{\GrabProofArgument}{. \egroup\ignorespaces}}
\newcommand*\samethanks[1][\value{footnote}]{\footnotemark[#1]}
\newcounter{proccnt}
\newcommand{\konote}[1]{}
\title{Stochastic k-Server: How Should Uber Work?*}
\author{
	Sina Dehghani \thanks{University of Maryland}
	\thanks{Supported in part by NSF CAREER award CCF-1053605,  NSF BIGDATA grant IIS-1546108, NSF AF:Medium grant CCF-1161365, DARPA GRAPHS/AFOSR grant FA9550-12-1-0423, and another DARPA SIMPLEX grant.}
	\and Soheil Ehsani \samethanks[1] \samethanks[2]
	\and MohammadTaghi HajiAghayi \samethanks[1] \samethanks[2]
	\and Vahid Liaghat \thanks{Facebook}
	\and Saeed Seddighin \samethanks[1] \samethanks[2]
}
\begin{document}
	\newcommand{\ignore}[1]{}
\renewcommand{\theenumi}{(\roman{enumi}).}
\renewcommand{\labelenumi}{\theenumi}
\sloppy

%
%

\date{}
\newcommand{\cpx}[1]{\mathcal{O}(#1)}
\newcommand{\metric}[0]{\mathcal{M}}
\newcommand{\HST}[0]{\mathcal{H}}
\newcommand{\E}[0]{\mathbb{E}}
\newcommand{\R}[0]{\mathcal{R}}
\newcommand{\calX}[0]{\mathcal{X}}
\newcommand{\calZ}[0]{\mathcal{Z}}
\newcommand{\Z}[0]{Z}
\newcommand{\I}{\ensuremath{\operatorname{I}}}
\newcommand{\generate}[0]{\ensuremath{\operatorname{generate}}}
\newcommand{\req}{\rho}
\newcommand{\KO}{ \ensuremath{\operatorname{OFKS}}}
\newcommand{\OPT}{ \ensuremath{\operatorname{OPT}}}
\newcommand{\A}{\mathcal{A}}
\newcommand{\B}{\mathcal{B}}
\newcommand{\Line}{\mathcal{L}}

\newcommand{\ALG}{\ensuremath{\operatorname{ALG}}}
\newcommand{\vlnote}[1]{{\color{red}$\ll$\textsf{#1 --VL}$\gg$\marginpar{\color{red}\tiny\bf VL}}}

\newcommand{\Mc}[1]{\ensuremath{\mathcal{#1}}}
\newcommand{\cost}[1]{\ensuremath{\left| #1 \right|}}
\newcommand{\ex}[1]{\ensuremath{\operatorname{E}\left[ #1 \right]}}
\newcommand{\expected}[2]{\ensuremath{\operatorname{E}_{#2}\left[ #1 \right]}}

\newcommand{\afrac}{ALG_{f}}
\newcommand{\aint}{ALG_{i}}

\newcommand{\snote}[1]{{\color{red}$\ll$\textsf{#1 --S}$\gg$\marginpar{\color{red}\tiny\bf S}}}

\newcommand{\floor}[1]{\lfloor #1 \rfloor}
\newcommand{\ceil}[1]{\lceil #1 \rceil}

\maketitle

\thispagestyle{empty}

\begin{abstract}
In this paper we study a stochastic variant of the celebrated $k$-server problem. In the $k$-server problem, we are required to minimize the total movement of $k$ servers that are serving an online sequence of $t$ requests in a metric.
In the stochastic setting we are given $t$ independent distributions $\langle P_1, P_2, \ldots, P_t\rangle$ in advance, and at every time step $i$ a request is drawn from $P_i$. 

Designing the optimal online algorithm in such setting is NP-hard, therefore the emphasis of our work is on designing an approximately optimal online algorithm.
We first show a structural characterization for a certain class of \textit{non-adaptive} online algorithms. We prove that in general metrics, the best of such algorithms has a cost of no worse than three times that of the optimal online algorithm. Next, we present an integer program that finds the optimal algorithm of this class for any arbitrary metric.
Finally by rounding the solution of the linear relaxation of this program, we present an online algorithm for the stochastic $k$-server problem with an approximation factor of $3$ in the line and circle metrics and factor of $O(\log n)$ in a general metric of size $n$. In this way, we achieve an approximation factor that is independent of $k$, the number of servers.

Moreover, we define the {\em Uber} problem, motivated by extraordinary growth of online network transportation services. In the Uber problem, each demand consists of two points -a source and a destination- in the metric. Serving a demand is to move a server to its source and then to its destination. The objective is again minimizing the total movement of the $k$ given servers. We show that given an $\alpha$-approximation algorithm for the $k$-server problem, we can obtain an $(\alpha+2)$-approximation algorithm for the Uber problem. Motivated by the fact that demands are usually highly correlated with the time (e.g. what day of the week or what time of the day the demand has arrived), we study the {\em stochastic Uber} problem. Using our results for stochastic $k$-server we can obtain a 5-approximation algorithm for the stochastic Uber problem in line and circle metrics, and a $O(\log n)$-approximation algorithm for general metrics.

Furthermore, we extend our results to the correlated setting where the probability of a request arriving at a certain point depends not only on the time step but also on the previously arrived requests.

\end{abstract}

\section{Introduction}
The $k$-server problem is one of the most fundamental problems in online computation that has been extensively studied in the past decades. In the $k$-server problem we have $k$ mobile servers on a metric space $\Mc{M}$. We receive an online sequence of $t$ requests where the $i^{th}$ request is a point $r_i\in \Mc{M}$. Upon the arrival of $r_i$, we need to move a server to $r_i$, at a cost equal to the distance from the current position of the server to $r_i$. The goal is to minimize the total cost of serving all requests.

Manasse, McGeoch, and  Sleator~\cite{Manas90} introduced the $k$-server problem as a natural generalization of several online problems, and a building block for other problems such as the metrical task systems. They considered the adversarial model, in which the online algorithm has no knowledge of the future requests. Following the proposition of Sleator and Tarjan~\cite{Sleat85}, they evaluate the performance
of an online algorithm using competitive analysis.
In this model, an online algorithm $\ALG$ is compared to
an \textit{offline} optimum algorithm $\OPT$ which is aware of the entire input in advance. For a sequence of requests $\req$, let $\cost{\ALG(\req)}$ and $\cost{\OPT(\req)}$ denote the total cost of $\ALG$ and $\OPT$ for serving $\req$. An algorithm is \textit{$c$-competitive} if for every $\req$, $\cost{\ALG(\req)} \leq c \cost{\OPT(\req)}+c_0$ where $c_0$ is independent of $\req$.

Manasse \textit{et al.}~\cite{Manas90} showed a lower bound of $k$ for the competitive ratio of any deterministic algorithm in any metric space with at least $k+1$ points. The celebrated \textit{$k$-server conjecture} states that this bound is tight for general metrics. For several years the known upper bounds were all exponential in $k$, until a major breakthrough was achieved by Koutsoupias and Papadimitriou~\cite{KoutP95}, who showed that the so-called \textit{work function algorithm} is $(2k-1)$-competitive. Proving the tight competitive ratio has been the ``holy grail'' of the field in the past two decades. This challenge has led to the study of the problem in special spaces such as the uniform metric (also known as the paging problem), line, circle, and trees metrics (see \cite{chrobak1991new,chrobak1991optimal} and references therein). We also refer the reader to Section~\ref{sec:related} for a short survey of randomized algorithms, particularly the recent result of Bansal, Buchbinder, Madry, and Naor~\cite{bansal2011polylogarithmic} which achieves the competitive ratio of $O(\log^3 n \log^2 k)$ for discrete metrics that comprise $n$ points.

The line metric (or Euclidean 1-dimensional metric space) is of particular interest for developing new ideas. Chrobak, Karloof, Payne, and Vishwnathan~\cite{chrobak1991new} were the first to settle the conjecture in the line by designing an elegant $k$-competitive algorithm. Chrobak and Larmore~\cite{chrobak1991optimal} generalized this approach to tree metrics. Later, Bartal and Koutsoupias~\cite{bartal2004competitive} proved that the work function algorithm is also $k$-competitive in line. Focusing on the special case of $k=2$ in line, Bartal \textit{et al.}~\cite{bartal2000randomized} show that, using randomized algorithms, one can break the barrier of lower bound $k$ by giving a $1.98$-competitive algorithm for the case where we only have two servers.

Despite the strong lower bounds for the $k$-server problem, there are heuristics algorithms that are \textit{constant} competitive in practice. For example, for the paging problem- the special case of uniform metric- the least recently used (LRU) strategy is shown to be experimentally constant competitive
(see Section~\ref{sec:related}). In this paper we present an algorithm an run it on real world data to measure its empirical performance. In particular we use the distribution of car accidents obtained from road safety data. Our experiments illustrate our algorithm is performing even better in practice. 

The idea of comparing the performance of an online
algorithm (with zero-knowledge of the future) to
the request-aware offline optimum has led to crisp
and clean solutions. However, that is not without its downsides. The results in the online model are
often very pessimistic leading to theoretical guarantees that are hardly comparable to experimental results. Indeed, one way to tighten this gap is to use stochastic information about the input data as we describe in this paper.

We should also point out that the competitive analysis is not the only possible or necessarily the
most suitable approach for this problem. Since the distributions from which the input is generated are known,
one can use dynamic programming (or enumeration of future events) to derive the optimal movement of servers. Unfortunately, finding such an optimal online solution using the distributions is an NP-hard problem \footnote{Reduction from $k$-median to Stochastic $k$-server: to find the $k$ median of set $S$ of vertices, one can construct an instance of stochastic $k$-server with $t=1$ and $P_1(v)=1/|S|$ for every $v\in S$. The best initialization of the servers gives the optimum solution to $k$-median of $S$.}, thus the dynamic programming or any other approach takes exponential time. This raises the question that how well one can perform in comparison to the best online solution. In the rest of the paper we formally define the model and address this question.

A natural and well-motivated generalization of $k$-server is to assume the demands are two points instead of just one, consisting of a source and a destination. To serve a demand we need to move a server to the source and then move it to the destination. We call this problem the {\em Uber} problem. One can see, the Uber problem is the same as $k$-server when the sources and the destinations are the same. We also show that, given an $\alpha$-approximation algorithm for the $k$-server problem, we can obtain a $(\alpha+2)$-approximation algorithm for the Uber problem. Thus our results for $k$-server also apply to the Uber problem.

\subsection{The Stochastic Model}
In this paper, we study the \textit{stochastic $k$-server problem} where the input
is not chosen adversarially, but consists of draws
from given probability distributions. This problem has lots of applications such as network transportations and equipment replacement in data centers. The current mega data centers contain hundreds of thousands of servers and switches with limited life-span. For example servers usually retire after at most three years. The only efficient way to scale up the maintenance in data centers is by automation, and robots are designed to handle maintenance tasks such as repairs or manual operations on servers. The replacement process can be modeled as requests that should be satisfied by robots, and robots can be modeled as servers. This problem also has applications in physical networks. As an example, suppose we model a shopping service (e.g. Google Express) as a $k$-server problem in which we receive an online sequence of shopping requests for different stores. We have $k$ shopping cars (i.e., servers) that can serve the requests by traveling to the stores. It is quiet natural to assume that on a certain time of the week/day, the requests arrive from a distribution that can be discovered by analyzing the history. For example, an Uber request is more likely to be from suburb to midtown in the morning, and from midtown to suburb at night.
We formalize this stochastic information as follows.

For every $i\in [1\cdots t]$, a discrete probability distribution $P_i$ is given in advance from which request $r_i$ will be drawn at time step $i$. The distributions are chosen by the adversary and are assumed to be independent but not necessarily identical.
This model is inspired by the well-studied model of \textit{prophet inequalities}
\footnote{In the prophet inequality setting, given (not necessarily identical) distributions $P_1,\ldots,P_t$, an online sequence of values $x_1,\ldots,x_n$ where
	$x_i$ is drawn from $P_i$, an onlooker has to choose one item from the succession
	of the values, where $x_i$ is revealed at step $i$. The onlooker can choose a value
	only at the time of arrival. The goal is to maximize the chosen value.}~\cite{krengel1977semiamarts,hajiaghayi2007automated}.
As mentioned before, the case of line metric has proven to be a very interesting restricted case for studying the $k$-server problem. In this paper, we focus mainly on the class of line metric though our results carry over to circle metric and general metrics as well.

In the adversarial model, the competitive ratio seems to be the only well-defined notion for analyzing the performance of online algorithms. However, in the presence of stochastic information, one can derive a much better benchmark that allows us to make fine-grained distinctions between the online algorithms. We recall that in the offline setting, for a class of algorithms $\Mc{C}$, the natural notion to measure the performance of an algorithm $\ALG\in \Mc{C}$ is the \textit{approximation ratio} defined as the worse case ratio of $\cost{\ALG}$ to $\cost{\OPT(\Mc{C})}$ where $\OPT(\Mc{C})$ is the optimal algorithm in the class. In this paper, we also measure the performance of an online algorithm by its approximation ratio-- compared to the \textit{optimal online solution}. We note that given distributions $P_1,\ldots,P_t$, one can iteratively compute the optimal online solution by solving the following exponential-size dynamic program: for every $i\in [0\cdots t]$ and every possible placement $A$ of $k$ servers (called a \textit{configuration}) on the metric, let $\tau(i,A)$ denote the minimum expected cost of an online algorithm for serving the first $i$ requests and then moving the servers to configuration $A$. Note that $\tau(i,A)$ can inductively be computed via the following recursive formula
\[ \tau(i,A)=\min_B \tau(i-1,B)+ \expected{\text{min. distance from $B$ to $A$ subject to serving $r_i$}}{r_i\sim P_i} \enspace , \]
where $\tau(0,A)$ is initially zero for every $A$.

\subsection{Our Results}
Our first main result is designing a constant approximation algorithm in the line metric when the distributions for different time steps are not necessarily identical.
\begin{theorem}\label{thm:mainline}
	There exists a $3$-approximation online algorithm for the stochastic $k$-server problem in the line metric. The running time is polynomial in $k$ and the sum of the sizes of the supports of input distributions. The same guarantee holds for the circle metric.
\end{theorem}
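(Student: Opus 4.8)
The plan is to follow the three-step outline announced above. \emph{Step 1 --- a non-adaptive class.} Let $\Mc{C}$ be the class of online algorithms that fix in advance a schedule of server configurations $C_1,\dots,C_t$ and behave as follows: right before $r_i$ arrives the servers sit in configuration $C_i$; upon seeing $r_i$ the algorithm sends the server of $C_i$ closest to $r_i$ out to $r_i$ and immediately back; then it moves the servers from $C_i$ to $C_{i+1}$. Writing $d$ for distance on the line, lifted to the transportation (min-cost matching) distance between equal-size multisets, the expected cost of such an algorithm is $\sum_i d(C_i,C_{i+1})+2\sum_i \ex{d(r_i,C_i)}$, where $d(r_i,C_i)$ is the distance from $r_i$ to the nearest point of $C_i$ and the expectation is over $r_i\sim P_i$. \emph{Step 2 --- $\min_{\ALG\in\Mc{C}}\cost{\ALG}\le 3\,\cost{\OPT}$}, where $\OPT$ is the optimal online algorithm (expected cost $\min_A \tau(t,A)$), which we may take to be deterministic. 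I would build a random schedule by running $\OPT$ on a \emph{fresh, independent} realization $r'_1,\dots,r'_t$, recording the configurations $Y_0,\dots,Y_t$ it visits, and setting $C_i:=Y_{i-1}$. The transition cost is $\ex{\sum_i d(Y_{i-1},Y_i)}\le \cost{\OPT}$, since $\sum_i d(Y_{i-1},Y_i)$ is at most $\OPT$'s movement on that sample path. For the detour cost, $Y_{i-1}$ depends only on $r'_1,\dots,r'_{i-1}$, hence is independent of $r_i\sim P_i$; and conditioned on any history (equivalently on $Y_{i-1}$) the optimal online algorithm must at step $i$ move some server to the point $r_i$ drawn from $P_i$, so its conditional step-$i$ cost is at least $\expected{d(r_i,Y_{i-1})}{r_i\sim P_i}$. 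Summing, $\sum_i \ex{d(r_i,Y_{i-1})}\le \cost{\OPT}$, so the expected detour cost is at most $2\,\cost{\OPT}$, and this (random) algorithm costs at most $3\,\cost{\OPT}$ in expectation over $r'$; fixing $r'$ to its best value yields a \emph{deterministic} member of $\Mc{C}$ of cost at most $3\,\cost{\OPT}$. This argument uses only the triangle inequality and holds in an arbitrary metric.

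\emph{Step 3 --- computing and rounding the best member of $\Mc{C}$.} It suffices to place servers only at points of $V:=\bigcup_i \mathrm{supp}(P_i)$. I would encode a schedule by, for each time step $i$ and each elementary interval $e$ determined by $V$, the indicators ``at least $\ell$ servers lie to the left of $e$'' for $\ell\in\{1,\dots,k\}$. On the line the transition cost $d(C_i,C_{i+1})$ equals $\sum_e |\#\{\text{servers left of }e\text{ at }i\}-\#\{\text{servers left of }e\text{ at }i{+}1\}|$, a linear function of these variables; and the nearest-server distance $d(p,C_i)$ for $p\in V$ equals $\sum_e \mathbf{1}[\text{no server of }C_i\text{ lies in the part of }e\text{ on the side of }p\text{ facing outward}]$, again linear in the same indicators once one introduces the obvious monotonicity constraints. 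Weighting the second family of terms by $2P_i(p)$ and summing gives an integer program of size polynomial in $k$ and $|V|$ whose optimum is the best algorithm in $\Mc{C}$. Its linear relaxation has a min-cost-transportation structure; I would argue it is integral, so that --- combined with Step 2 --- its optimum is an online algorithm of cost at most $3\,\cost{\OPT}$, giving Theorem~\ref{thm:mainline} with factor exactly $3$ and running time polynomial in $k$ and $|V|\le\sum_i|\mathrm{supp}(P_i)|$. The circle reduces to the line by the standard device of enumerating a point that no server crosses; and in a general metric one solves the same program after an $O(\log n)$-distortion embedding into a tree, which is the source of the weaker bounds elsewhere in the paper.

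\emph{Expected main obstacle.} Step 2 is a short coupling argument once $\Mc{C}$ is pinned down, the only delicate point being the independence of the sampled schedule from each realized request. The real work is in Step 3: the transition costs alone are a textbook min-cost flow, but the ``distance to the nearest of $k$ servers'' term $\ex{d(r_i,C_i)}$ is what forces the auxiliary indicator variables, and one must check both that it is faithfully captured by a polynomial-size linear objective and that the resulting polytope still has integral vertices --- this is what guarantees that no factor beyond the $3$ of Step 2 is lost on the line and circle.
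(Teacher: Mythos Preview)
Your Steps~1 and~2 are essentially the paper's argument: Lemma~3.1 shifts the optimal online algorithm by one step (your $C_i:=Y_{i-1}$) to obtain a request-oblivious algorithm at a factor~$3$, and Lemma~3.2 averages over histories to make it fully non-adaptive. The coupling via a fresh sample is a clean way to phrase the same idea.

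The gap is in Step~3. You assert that the LP relaxation ``has a min-cost-transportation structure; I would argue it is integral,'' and you correctly flag this as the main obstacle---but you do not resolve it, and the paper does \emph{not} proceed this way. The difficulty is precisely the term $\ex{d(r_i,C_i)}$: the distance from a point to the \emph{nearest} of $k$ servers is a minimum, and your sketch of writing it as a sum of side-indicator variables does not obviously yield a linear objective, let alone an integral polytope once coupled with the flow constraints. The paper makes no integrality claim. Instead it solves the LP to obtain a \emph{fractional} non-adaptive schedule and then invokes Theorem~1.4 (Section~5.1): on the line one draws a single offset $r\in[0,1)$ and places integral servers at the $r$-quantiles $f_A(r),f_A(r+1),\dots,f_A(r+k-1)$ of each fractional configuration $A$. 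This rounding (i) keeps a server at every point carrying mass $\ge 1$, and (ii) preserves the transportation distance between consecutive configurations \emph{exactly in expectation}. That is the mechanism that avoids any loss beyond the factor~$3$; integrality of the LP is neither claimed nor needed.

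Your handling of the circle is also different from the paper's. The ``cut at a point no server crosses'' device is delicate here because the right cut may depend on the random requests, whereas the schedule must be fixed in advance; the paper instead observes that the same quantile-offset rounding works verbatim on the circle.
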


For the general metric, we present an algorithm with a logarithmic approximation guarantee.

\begin{theorem}\label{thm:maingeneral}
	There exists a $O(\log n)$-approximation online algorithm for the stochastic $k$-server problem in a general metric of size $n$.
\end{theorem}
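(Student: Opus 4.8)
The plan is to reduce the general metric to a tree and then run the same integer-program-plus-rounding machinery that underlies Theorem~\ref{thm:mainline}. First I would fix, before any request arrives, a random hierarchically well-separated tree $T$ drawn from the Fakcharoenphol--Rao--Talwar distribution for the $n$-point metric $\metric$, so that (a)~$d_\metric(u,v)\le d_T(u,v)$ for all $u,v$ and every tree in the support (domination), and (b)~$\E_T[d_T(u,v)]=O(\log n)\cdot d_\metric(u,v)$. Each input distribution $P_i$ over points of $\metric$ is reread as a distribution over the corresponding leaves of $T$. Since the embedding is an offline object, sampling $T$ costs nothing against the online benchmark.

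Next I would sandwich the two optima. Let $\OPT$ be the optimal online cost on $\metric$ and, for a fixed tree $T$, let $\OPT_T$ be the optimal online cost on $T$ with the same distributions. Replaying the relocation decisions of the $\metric$-optimal online algorithm verbatim on $T$ is a legal online algorithm for $T$ --- the request at step $i$ is a leaf of $T$ whose image is exactly the point served in $\metric$, so a server on leaf $r_i$ is a server on point $r_i$ --- and keeping the same server-to-point assignment at each step (possibly suboptimal on $T$, which only helps) gives, by property~(b) and linearity of expectation, $\E_T[\OPT_T]\le O(\log n)\cdot\OPT$. Conversely, once we have an online algorithm $\ALG_T$ running on $T$, we mirror its moves in $\metric$: whenever $\ALG_T$ sends a server to a leaf, send the corresponding server in $\metric$ to that point; by domination~(a) the cost paid in $\metric$ is at most the cost paid on $T$, and requests are genuinely served. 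Hence $\E[\text{cost in }\metric]\le\E[\cost{\ALG_T}]$, where the expectation is over the draw of $T$ and the requests.

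It remains to construct $\ALG_T$ on the tree with $\E[\cost{\ALG_T}]=O(1)\cdot\E_T[\OPT_T]$. Here I would apply, on the metric $T$, the structural theorem (the best non-adaptive algorithm is within a factor $3$ of the optimal online algorithm, which holds for general metrics) together with the integer program that computes the optimal non-adaptive algorithm for an arbitrary metric; its linear relaxation has value at most $3\,\OPT_T$. The one genuinely new ingredient is a rounding of this LP on an HST that loses only a constant factor, generalizing the line/circle rounding of Theorem~\ref{thm:mainline}: on a tree the fractional solution decomposes across edges --- each edge is crossed by the servers some fractional number of times at each step, and the constraints linking consecutive levels have the same interval/flow structure that makes the one-dimensional rounding exact --- so one rounds level by level, paying each edge only a constant multiple of its fractional load. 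This is the step I expect to be the main obstacle: checking that the branching of a general HST does not break the one-dimensional argument (the fallback, if the direct argument is delicate, is to round each root-to-leaf \emph{spine} separately and argue the pieces compose with a constant loss). Chaining the three bounds yields
\[
\E[\text{cost in }\metric]\ \le\ \E\big[\cost{\ALG_T}\big]\ \le\ O(1)\cdot\E_T[\OPT_T]\ \le\ O(\log n)\cdot\OPT .
\]

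Finally I would verify the running time: a sample from the FRT distribution and the resulting HST have size polynomial in $n$; the integer/linear program on $T$ has size polynomial in $n$, $k$, and the total support size of the $P_i$; and the rounding is polynomial post-processing. This gives the claimed polynomial running time and an $O(\log n)$-approximate online algorithm for the stochastic $k$-server problem on any $n$-point metric, proving Theorem~\ref{thm:maingeneral}.
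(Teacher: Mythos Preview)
Your architecture is essentially the paper's: factor-$3$ reduction to non-adaptive algorithms, LP relaxation, FRT embedding into HSTs, and a constant-factor fractional-to-integral rounding on the HST. The only structural difference is the order of operations: the paper solves the LP directly on the original metric $\metric$ (so the fractional non-adaptive optimum is at most $3\,\OPT$ with no embedding loss yet) and only then invokes the $O(\log n)$ rounding theorem, which internally passes through an HST; you instead embed first, solve the LP on the tree, and use the sandwich $\E_T[\OPT_T]\le O(\log n)\,\OPT$. Both pipelines are sound and yield the same bound.

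The one place where your write-up has a gap is the HST rounding. You flag it as ``the one genuinely new ingredient'' and sketch a level-by-level extension of the line rounding, with a spine-decomposition fallback. In the paper this is \emph{not} new: it is quoted as a black box from Bansal, Buchbinder, Madry, and Naor~\cite{bansal2011polylogarithmic}, who show that on any $\sigma$-HST with $\sigma>5$ one can maintain an integral configuration consistent with a given fractional one at $O(1)$ expected overhead. Your proposed analogy with the line argument does not obviously go through---the single random offset $r\in[0,1)$ that makes the line rounding exact relies on a total order of the points, which an HST with branching does not have---so you should replace that paragraph by a citation to \cite{bansal2011polylogarithmic} rather than attempt a fresh proof.
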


We prove the theorems using two important structural results. The first key ingredient is a general reduction from class of online algorithms to a restricted class of \textit{non-adaptive algorithms} while losing only a constant factor in the approximation ratio. Recall that a configuration is a placement of $k$-servers on the metric. We say an algorithm $\ALG$ is \textit{non-adaptive} if it follows the following procedure: $\ALG$ pre-computes a sequence of configurations $A_0, A_1,\ldots,A_t$. We start by placing the $k$-servers on $A_0$. Upon the arrival of $r_i$, (i) we move the servers to configuration $A_i$; next (ii) we move the closest server $s$ to $r_i$; and finally (iii) we return $s$ to its original position in $A_i$. We first prove the following structural result.

\begin{theorem}\label{thm:nonadaptive}
	For the stochastic $k$-server problem in the general metric, the optimal non-adaptive online algorithm is within $3$-approximation of the optimal online algorithm.
\end{theorem}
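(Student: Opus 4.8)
The plan is to start from an arbitrary online algorithm $\ALG$ (in particular the optimal one) and construct a non-adaptive algorithm whose expected cost is at most three times that of $\ALG$. The natural choice of the configuration sequence $A_0,A_1,\ldots,A_t$ is the \emph{expected configuration} of $\ALG$: let $A_i$ be the (random) configuration of $\ALG$'s servers after it has served $r_1,\ldots,r_i$, and think of $A_i$ as a random variable over configurations. Since a non-adaptive algorithm must commit to a deterministic sequence, I would actually let the non-adaptive algorithm sample the whole trajectory of $\ALG$ once in advance (equivalently, pick the best deterministic sequence; by averaging there is a deterministic choice at least as good as the randomized one). So fix a sample path of $\ALG$ and let $A_i$ be $\ALG$'s configuration after step $i$ on that path.

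The key step is to bound, for each time step $i$, the expected cost incurred by the non-adaptive algorithm in phase~(i)–(iii) against the cost $\ALG$ pays at step $i$. Write $d_i = d(A_{i-1}, A_i)$ for the movement $\ALG$ makes at step $i$ to reconfigure and serve $r_i$ (this is a min-cost matching distance between the two configurations), and note $\ex{\sum_i d_i} = \cost{\ALG}$. In step~(i) the non-adaptive algorithm moves from $A_{i-1}$ to $A_i$, which costs exactly $d_i$. In steps~(ii)–(iii) it sends the server of $A_i$ closest to $r_i$ out to $r_i$ and back, paying $2\,\mathrm{dist}(A_i, r_i)$. The point is that $\ALG$, having reached configuration $A_i$ right after serving $r_i$, already had some server on $r_i$ at the moment of service; the distance from that server's position in $A_{i-1}$ to $r_i$ plus the distance from $r_i$ to its final position in $A_i$ is at most $d_i$ (it is one term of the matching that realizes $d_i$), and in particular $\mathrm{dist}(A_i, r_i) \le d_i$. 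Hence the per-step cost of the non-adaptive algorithm is at most $d_i + 2\,\mathrm{dist}(A_i,r_i) \le d_i + 2 d_i = 3 d_i$. Summing over $i$ and taking expectations gives total cost at most $3\,\cost{\ALG}$; since $\ALG$ was an arbitrary online algorithm, taking $\ALG = \OPT$ and then the best non-adaptive algorithm (which is no worse than the one we built) yields the claimed factor $3$.

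I expect the main subtlety to be handling adaptivity correctly in the reduction: $\ALG$'s configuration $A_i$ is a random variable that depends on the realized requests $r_1,\ldots,r_i$, so "pre-computing" the sequence $A_0,\ldots,A_t$ requires care. The clean way is to first argue that a \emph{randomized} non-adaptive algorithm that draws its configuration sequence by simulating $\ALG$ on fresh internal randomness achieves expected cost $3\,\cost{\ALG}$ — here each step-$i$ bound must be taken in expectation over both the internal sample path and the true request $r_i\sim P_i$, using independence — and then derandomize by fixing the best sequence. A second point to verify is that the inequality $\mathrm{dist}(A_i,r_i)\le d_i$ genuinely holds for the relevant notion of distance between configurations (the minimum-cost perfect matching / transportation distance), which is where one uses that serving $r_i$ forces $\ALG$ to route a server through $r_i$ between configurations $A_{i-1}$ and $A_i$. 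Everything else is a direct summation; no properties of the metric beyond the triangle inequality are needed, which is why the result holds in a general metric.
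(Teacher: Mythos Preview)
Your per-step inequality $\mathrm{dist}(A_i,r_i)\le d_i$ is valid only when $A_i$ is the configuration $\ALG$ reaches after serving the \emph{actual} request $r_i$; it relies on the fact that the server that served $r_i$ ends up somewhere in $A_i$ at distance at most $d_i$ from $r_i$. But that very dependence is what prevents the sequence $(A_i)$ from being precomputed. When you pass to fresh internal randomness---simulating $\ALG$ on an independent draw $r'_1,\ldots,r'_t$ and taking the resulting configurations---the inequality collapses: $A_i$ now contains a server on $r'_i$, not on $r_i$, and $\mathrm{dist}(A_i,r_i)$ is no longer controlled by $d_i=d(A_{i-1},A_i)$. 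A one-step instance already defeats the averaging you propose: take $k=1$ on the line, the server starting at $0$, and $P_1$ putting the request at $0$ with probability $1-\epsilon$ and at $D$ with probability $\epsilon$. Then $\ex{\cost{\ALG}}=\epsilon D$, while your randomized non-adaptive construction pays $\epsilon D$ in expected movement plus $2\cdot 2\epsilon(1-\epsilon)D$ in expected serving, for a ratio $5-4\epsilon\to 5$. So the expected cost of the simulated-path construction is \emph{not} bounded by $3\,\cost{\ALG}$, and you cannot derandomize from that bound.

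The paper closes exactly this gap by \emph{lagging one step}: it sets $B_i=A_{i-1}$, the configuration $\ALG$ occupied \emph{before} seeing $r_i$. Now $B_i$ depends only on $r_1,\ldots,r_{i-1}$, hence is oblivious to the current request, yet on the true sample path one still has $d(B_i,r_i)=d(A_{i-1},r_i)\le d(A_{i-1},A_i)$ because $A_i$ serves $r_i$. Summing gives the factor $3$ for this request-oblivious (but still history-dependent) algorithm; only \emph{then} does one replace the history $r_1,\ldots,r_{i-1}$ by fresh samples, which preserves the expected cost since $(B_{i-1},B_i,r_i)$ has the same joint law either way. Your outline has all the right ingredients except this shift from $A_i$ to $A_{i-1}$---and that shift is precisely the idea that makes the coupling go through.
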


Using the aforementioned reduction, we focus on designing the optimal non-adaptive algorithm. We begin by formulating the problem as an integer program. The second ingredient is to use the relaxation of this program to formalize a natural \textit{fractional variant} of the problem. In this variant, a configuration is a fractional assignment of \textit{server mass} to the points of the metric such that the total mass is $k$. To serve a request at point $r_i$, we need to move some of the mass to have at least one amount of server mass on $r_i$. The cost of moving the server mass is naturally defined as the integral of the movement of infinitesimal pieces of the server mass. By solving the linear relaxation of the integer program, we achieve the optimal fractional non-adaptive algorithm. We finally prove Theorems~\ref{thm:mainline} and~\ref{thm:maingeneral} by leveraging the following rounding techniques. The rounding method in line has been also observed by T{\"u}rkoglu~\cite{turkoglu2005k}. We provide the proof for the case of line in Section~\ref{fred} for the sake of completeness. The rounding method for general metrics is via the well-known embedding of a metric into a distribution of well-separated trees while losing a logarithmic factor in the distortion. Bansal \textit{et al.}~\cite{bansal2011polylogarithmic} use a natural rounding method similar to that of Blum, Burch, and Kalai~\cite{blum1999finely} to show that any fractional $k$-server movement on well-separated trees can be rounded to an integral counterpart by losing only a constant factor.

\begin{theorem}[first proven in \cite{turkoglu2005k}]\label{thm:linefractional}
	Let $\ALG_f$ denote a fractional $k$-server algorithm in the line, or circle. One can use $\ALG_f$ to derive a randomized integral algorithm $\ALG$ such that for every request sequence $\sigma$, $\ex{\cost{\ALG(\sigma)}}=\cost{\ALG_f(\sigma)}$. The expectation is over the internal randomness of $\ALG$. Furthermore, in the stochastic model $\ALG$ can be derandomized.
\end{theorem}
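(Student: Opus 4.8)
The plan is to couple the integral servers to the fractional server mass via a single uniformly random threshold; this works because the line (and, with one extra twist, the circle) is one-dimensional. Represent a fractional configuration $\rho$ on the line by its cumulative mass function $F(x) = \rho((-\infty,x])$, a nondecreasing function from the line onto $[0,k]$; since all input distributions have finite support, $F$ is a step function. Draw $\theta$ uniformly at random from the unit interval, once, at the very start. At every time step, if $\ALG_f$ is in configuration $\rho$ with cumulative function $F$, place the $j$-th integral server, for $j = 0,1,\dots,k-1$, at $G(\theta+j) := \inf\{x : F(x) \ge \theta+j\}$ (the generalized inverse of $F$ at level $\theta+j$); this places exactly $k$ servers. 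For legality, note that when $\ALG_f$ serves $r_i$ its configuration $\rho_i$ carries at least one unit of mass at the point $r_i$, so $F_i$ jumps by at least $1$ there; an interval of length $\ge 1$ always contains one of the equally spaced levels $\theta, \theta+1, \dots, \theta+k-1$, so for every $\theta$ some integral server lies exactly on $r_i$. Thus $\ALG$ is a legal integral $k$-server algorithm, and between consecutive steps it moves server $j$ from the old level-$(\theta+j)$ crossing to the new one.

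The cost identity is the heart of the matter. Fix a step in which the fractional configuration changes from $\rho$ (cumulative $F$) to $\rho'$ (cumulative $F'$); the movement of $\ALG$ at that step is $\sum_{j=0}^{k-1} \lvert G(\theta+j) - G'(\theta+j)\rvert$. Integrating over $\theta$ and substituting $u = \theta+j$ gives $\int_0^1 \sum_{j=0}^{k-1}\lvert G(\theta+j)-G'(\theta+j)\rvert \, d\theta = \int_0^k \lvert G(u) - G'(u)\rvert\, du$, and by the elementary fact that the $L_1$ distance between two monotone functions equals the $L_1$ distance between their generalized inverses, this equals $\int_{\mathbb{R}} \lvert F(x) - F'(x)\rvert\, dx$, which is precisely the transportation distance on the line between $\rho$ and $\rho'$, i.e.\ the cost $\ALG_f$ pays for that step. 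Summing over all steps yields $\ex{\cost{\ALG(\sigma)}} = \cost{\ALG_f(\sigma)}$ for every request sequence $\sigma$, the expectation being over $\theta$ only. For the circle I would cut the circle at a point, unroll it to an interval whose length is the circumference, and run the same construction; the one subtlety is that an integral server should move along the same arc as the fractional mass rather than always through the cut, which is arranged by choosing the cut point at random as well, after which the expectation matches the circular transportation distance.

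For the derandomization in the stochastic model, the key point is that $\ALG$ uses only the single scalar $\theta$. Since in the stochastic setting the fractional solution is non-adaptive, its configurations are fixed in advance, so for each fixed $\theta$ the behavior of the induced algorithm $\ALG_\theta$ is determined and $c(\theta) := \expected{\cost{\ALG_\theta(\sigma)}}{\sigma}$ is a step function of $\theta$: a breakpoint can occur only when some level $\theta+j$ sweeps past a jump of one of the finitely many cumulative functions, so there are polynomially many pieces, on each of which $c(\theta)$ is a fixed value computable by summing distances over the supports of the $P_i$. Since $\int_0^1 c(\theta)\, d\theta = \ex{c(\theta)} = \expected{\cost{\ALG_f(\sigma)}}{\sigma}$, any minimizer $\theta^\star$ of $c$ satisfies $c(\theta^\star) \le \expected{\cost{\ALG_f(\sigma)}}{\sigma}$, so the deterministic algorithm $\ALG_{\theta^\star}$ is the one claimed, and it can be found in polynomial time.

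I expect the main obstacle to be making the cost identity of the second paragraph fully rigorous: the generalized inverses and atoms of $F$, the measure-zero set of $\theta$ for which some level coincides exactly with a jump value, and the precise statement of the monotone $L_1$-inverse identity all require care. The secondary technical point is the circle, where one must check that randomizing the cut point eliminates the ``long way around'' artifact so that the expected integral movement equals the circular transportation distance. Everything else---the legality of the rounding, the step-function structure of $c(\cdot)$, and the averaging argument for derandomization---is routine given the discreteness of the input.
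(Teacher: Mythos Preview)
Your proposal is correct and follows essentially the same approach as the paper: pick a single uniform threshold $\theta\in[0,1)$ and place the servers at the levels $\theta,\theta+1,\dots,\theta+k-1$ of the generalized inverse of the cumulative mass function, so that legality follows from the unit jump at the request point and the cost identity follows from integrating over $\theta$. Your write-up is in fact more explicit than the paper's on several points---you name the $L_1$ inverse identity that underlies the cost equality, you spell out the derandomization via the piecewise-constant function $c(\theta)$, and you sketch the circle case---whereas the paper simply asserts the integral formula for the transportation distance and does not give details for the circle or the derandomization.
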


\begin{theorem}[proven in \cite{bansal2011polylogarithmic}]
	Let $\ALG_f$ denote a fractional $k$-server algorithm in any metric. One can use $\ALG_f$ to derive a randomized integral algorithm $\ALG$ such that for every request sequence $\sigma$, $\ex{\cost{\ALG(\sigma)}}\leq O(\log n) \cost{\ALG_f(\sigma)}$.
\end{theorem}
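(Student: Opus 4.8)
The plan is to reduce the problem on a general $n$-point metric to the problem on a hierarchically well-separated tree (HST), where integral rounding is cheap, and to absorb the logarithmic factor entirely into this reduction. I would carry this out in two stages: (i) replace $\metric$ by a random HST $T$ with expected distortion $O(\log n)$ and lift the fractional trajectory of $\ALG_f$ to $T$; and (ii) round the fractional trajectory on $T$ to an integral one, losing only a constant factor, and read the integral server positions back on $\metric$.

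For stage (i) I would invoke the standard probabilistic embedding of an $n$-point metric into a distribution $\mathcal{D}$ over HSTs: one can choose $\mathcal{D}$ so that $d_{\metric}(u,v)\le d_T(u,v)$ holds with probability $1$ for all $u,v$, while $\E_{T\sim\mathcal{D}}[d_T(u,v)]\le O(\log n)\, d_{\metric}(u,v)$. Sampling $T\sim\mathcal{D}$ once, up front, I would reinterpret each fractional configuration produced by $\ALG_f$ on input $\sigma$ --- a distribution of total server mass $k$ over the point set --- verbatim as a fractional configuration on the leaves of $T$. Because $d_T$ dominates $d_{\metric}$, every request that $\ALG_f$ serves in $\metric$ is still served in $T$, and because the cost of each infinitesimal mass transfer only grows when measured in $T$, linearity of expectation gives that the expected cost (over the draw of $T$) of this lifted fractional solution is at most $O(\log n)\,\cost{\ALG_f(\sigma)}$, for every fixed $\sigma$.

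For stage (ii) I would use the rounding scheme of Bansal \emph{et al.}~\cite{bansal2011polylogarithmic}, which adapts the idea of Blum, Burch, and Kalai~\cite{blum1999finely}: on the fixed tree $T$ one tracks, for every internal node $v$, the fractional server mass $m_v$ in the subtree rooted at $v$, and maintains an integral configuration in which the number $n_v$ of integral servers in that subtree is always $\lfloor m_v\rfloor$ or $\lceil m_v\rceil$, with $n_v$ equal to $\lceil m_v\rceil$ precisely when an independent uniform threshold attached to $v$ falls below the fractional part of $m_v$ --- a carefully chosen coupling of the thresholds ensuring the rounded counts at different levels remain mutually consistent, i.e.\ $n_v=\sum_c n_c$ over the children $c$ of $v$. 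Under this coupling, a change of $\delta$ in $m_v$ flips $n_v$ with probability $O(\delta)$, and such a flip costs one server traversal of an edge whose length is proportional to the diameter of $v$'s subtree. The crux of the argument --- and the step I expect to be the genuine obstacle --- is the cost accounting: one must sum these expected charges over all nodes and all time steps and show the total is $O(1)$ times the fractional movement on $T$. This is precisely where the geometric decay of edge lengths down an HST is indispensable, since it makes the per-level contributions form a convergent series dominated by the fractional flow across that level; a potential-function analysis then yields the constant-factor bound. Service costs nothing extra: the rounding keeps at least one integral server at any leaf where $m$ is at least $1$ along the relevant root-to-leaf path, so each request is served by the server the rounding has already placed there.

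Finally I would compose the two guarantees. Running the integral algorithm on $T$ and projecting its server positions onto $V(\metric)$ can only decrease movement cost, since $d_{\metric}\le d_T$; hence, for every fixed $\sigma$, the cost on $\metric$ is at most the integral cost on $T$, which in expectation over the rounding randomness is $O(1)$ times the lifted fractional cost on $T$, which in expectation over $T\sim\mathcal{D}$ is $O(\log n)\,\cost{\ALG_f(\sigma)}$. Taking the expectation over both sources of internal randomness and invoking linearity of expectation gives $\ex{\cost{\ALG(\sigma)}}\le O(\log n)\,\cost{\ALG_f(\sigma)}$ for every request sequence $\sigma$, which is the claim. Note that each random choice ($T$ and the thresholds) is made without reference to future requests, so $\ALG$ is a genuine online algorithm.
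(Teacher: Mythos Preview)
Your proposal is correct and follows essentially the same route the paper sketches in Section~\ref{fred}: embed the metric into a random $\sigma$-HST with expected distortion $O(\log n)$ via~\cite{fakcharoenphol2003tight}, lift the fractional trajectory to the tree, apply the constant-factor fractional-to-integral rounding of Bansal \emph{et al.}~\cite{bansal2011polylogarithmic} on the HST, and project back to $\metric$ using $d_{\metric}\le d_T$. The paper itself does not give a self-contained proof of this theorem---it simply cites~\cite{bansal2011polylogarithmic} for the HST rounding and~\cite{fakcharoenphol2003tight} for the embedding and composes them---so your write-up is in fact more detailed than the paper's, particularly in spelling out the threshold/coupling mechanism and the role of geometric edge-length decay in the cost accounting.
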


We further show that in the stochastic setting, if the number of possible input scenarios is $m$, even if the distributions are correlated, one can compute the best \textit{fractional} online competitive algorithm in time polynomial in $m$ and $n$. Note that since the number of placements of $k$ servers on $n$ points is exponential, it is not possible to enumerate all the possible choices of an online algorithm. We solve this problem by presenting a non-trivial LP relaxation of the problem with size polynomial in $n$ and $m$; therefore obtaining the following result. We present the formal model and analysis in Appendix~\ref{sec:correlated}.

\begin{theorem}\label{thm:correalted}
	The optimal online algorithm of the stochastic $k$-server problem with correlated setting in line and circle can be computed in polynomial time w.r.t. the number of possible scenarios. In general metrics, an $O(\log n)$-approximation algorithm can be obtained.
\end{theorem}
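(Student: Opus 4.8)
The plan is to generalize the integer-program/LP approach developed for the non-adaptive setting so that it can express an *adaptive* online algorithm directly, while still having polynomial size. The key observation is that with $m$ possible scenarios, the information available to an online algorithm at time $i$ is exactly the partition of scenarios that are still consistent with the requests seen so far; this partition refines as $i$ grows and forms a laminar family (a "scenario tree") of total size $O(mt)$. I would introduce, for every node $u$ of this scenario tree and every point $x$ of the metric, a variable $f_u(x)$ denoting the (fractional) amount of server mass placed at $x$ after the online algorithm has processed the requests along the root-to-$u$ path. The constraint $\sum_x f_u(x)=k$ enforces that this is a valid fractional configuration, and for the request $r$ revealed at node $u$ we add the covering constraint $f_u(r)\ge 1$ (in the Uber variant, the appropriate source/destination version). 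The objective is the expected transportation cost between $f_{\mathrm{parent}(u)}$ and $f_u$, summed over the tree weighted by scenario probabilities; this is encoded with the standard flow/earthmover variables $g_u(x,y)\ge 0$ with $\sum_y g_u(x,y)-\sum_y g_u(y,x)=f_u(x)-f_{\mathrm{parent}(u)}(x)$ and cost $\sum_u \Pr[u]\sum_{x,y} d(x,y)\,g_u(x,y)$. This LP has $O(mt\,n^2)$ variables and constraints, hence is solvable in polynomial time, and by construction its optimum equals the cost of the optimal *fractional* online algorithm — because any fractional online algorithm is, by definition, a function from consistent-scenario-sets to configurations, which is precisely what the variables $f_u$ range over. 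The reason this is "non-trivial," as the excerpt warns, is that one cannot enumerate configurations (there are exponentially many placements of $k$ servers); the trick is that the LP never refers to a configuration as an atomic object, only to its marginal mass vector $f_u$, and transportation cost depends only on these marginals.

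Next I would transfer this fractional solution to an integral algorithm using the rounding theorems already stated. For the line and circle, Theorem~\ref{thm:linefractional} gives a randomized integral algorithm with the *same* expected cost on every request sequence, and — crucially — states that in the stochastic model it can be derandomized; applying it scenario-tree-node by scenario-tree-node yields an exact (optimal) online algorithm for the line and circle, establishing the first sentence of the theorem. For a general metric of size $n$, I would instead invoke the Bansal–Buchbinder–Mądry–Naor rounding (the last theorem quoted) to obtain an integral online algorithm whose expected cost is within $O(\log n)$ of the optimal fractional online cost; since the optimal fractional online cost is a lower bound on the optimal *integral* online cost, this gives the claimed $O(\log n)$-approximation in the correlated setting.

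The main obstacle — and the place I expect to spend the most care — is verifying that the LP optimum genuinely equals the optimal fractional online cost rather than merely lower-bounding it, i.e.\ that every feasible LP solution can be realized as a bona fide online fractional algorithm and vice versa. The "vice versa" direction (online algorithm $\Rightarrow$ feasible LP point) is immediate from the scenario-tree description; the forward direction requires checking that the covering constraints $f_u(r)\ge 1$ are exactly the right relaxation of "serve the request" in the fractional model defined in the excerpt, and that the per-node transportation variables compose correctly along paths so that the summed objective really is the expected total movement (no double counting, and the expectation over scenarios is taken with the correct conditional probabilities at each node). A secondary technical point is making the scenario tree and its probabilities explicit in the correlated model — defining $\Pr[u]$ as the probability that the realized scenario is consistent with node $u$, and checking these sum correctly across each level — which I would set up carefully in Appendix~\ref{sec:correlated} before writing the LP. Once the equivalence is pinned down, the approximation guarantees follow mechanically from the two cited rounding theorems.
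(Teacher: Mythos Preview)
Your proposal is correct and follows the same high-level arc as the paper---build the scenario trie, write a polynomial-size LP that captures exactly the fractional online algorithms, then invoke Theorem~\ref{thm:linefractional} (line/circle) or the Bansal~\textit{et al.}\ rounding (general metrics)---but the LP you write is genuinely different from the one in Appendix~\ref{sec:correlated}. You work in a \emph{configuration} view: a mass vector $f_u(\cdot)$ at every trie node together with earthmover flow variables $g_u(\cdot,\cdot)$ to the parent, directly generalizing the non-adaptive LP of Section~\ref{fsf}. The paper instead works in a \emph{server-routing} view: it adds $k$ dummy ancestors above the root (one per server) and uses binary variables $x_{u,v}$, for $u$ an ancestor of $v$, indicating that the server which last served $u$ next serves $v$; the constraints are ``each node has total incoming $x$ equal to $1$'' and ``from $u$, total outgoing $x$ along any single scenario path is at most $1$,'' with objective $\sum \Pr(v)\,d(u,v)\,x_{u,v}$. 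The relaxation of this routing LP also corresponds to a fractional online algorithm, so both formulations land on the same optimum and the same rounding finishes the job. Your version has the advantage of making the ``LP $=$ optimal fractional online'' equivalence essentially definitional (which is exactly the obstacle you flagged); the paper's version has the advantage that its size is independent of $n$---roughly $O((mt)^2)$ variables versus your $O(mt\,n^2)$---since it never names metric points except through the requests themselves.
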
 

We also show that having an $\alpha$-approximation algorithm for $k$-server, we can obtain a $(\alpha+2)$-approximation for the Uber problem, using a simple reduction.

\begin{theorem}
	Let $\ALG$ denote an $\alpha$-approximation algorithm for $k$-server. One can use $\ALG$ to derive a  $(\alpha+2)$-approximation algorithm for the Uber problem.
\end{theorem}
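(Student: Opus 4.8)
The plan is to reduce the Uber problem to ordinary $k$-server by \emph{forgetting the destinations}. Given an Uber instance, write the demand sequence as $\sigma=\langle (s_1,d_1),\dots,(s_t,d_t)\rangle$ and let $\sigma_{\mathrm{src}}=\langle s_1,\dots,s_t\rangle$ be the sequence of sources alone (in the stochastic model, we feed $\ALG$ the source-marginals of the input distributions $P_i$). We simulate the given $\alpha$-approximation algorithm $\ALG$ for $k$-server on $\sigma_{\mathrm{src}}$ and convert this simulation into an Uber algorithm $\ALG'$ as follows: at step $i$ we first replay exactly the moves $\ALG$ makes to bring a server onto $s_i$; then we move that same server from $s_i$ to $d_i$, which serves the demand; and finally we move it \emph{back} from $d_i$ to $s_i$. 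This last move is the crux of the reduction: it restores the configuration of $\ALG'$ to coincide with that of $\ALG$ after step $i$, so the simulation stays consistent. Note that $\ALG'$ only uses $s_i$ and $d_i$ at step $i$, so it is a legitimate online algorithm.

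The cost bookkeeping is then immediate. If $\ALG$ pays $c_i$ at step $i$ on $\sigma_{\mathrm{src}}$, then $\ALG'$ pays $c_i+2\,d(s_i,d_i)$ at step $i$, so
\[
	\cost{\ALG'(\sigma)} \;=\; \cost{\ALG(\sigma_{\mathrm{src}})} \;+\; 2\sum_{i=1}^{t} d(s_i,d_i).
\]
It thus suffices to bound both terms by $\OPT$ of the Uber instance up to the right constant. For the sum, I would use $\OPT_{\mathrm{Uber}}\ge \sum_i d(s_i,d_i)$ (in expectation, in the stochastic model): since demands are served in order, for each $i$ the portion of some server's trajectory between the moment it first reaches $s_i$ and the moment it first reaches $d_i$ has length at least $d(s_i,d_i)$ by the triangle inequality, and these portions are pairwise disjoint pieces of the servers' trajectories across different $i$. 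For the first term I would use $\OPT_{k\text{-server}}(\sigma_{\mathrm{src}})\le \OPT_{\mathrm{Uber}}$: the server trajectory realized by \emph{any} Uber solution already visits $s_1,\dots,s_t$ in order, so it is itself a feasible $k$-server schedule for $\sigma_{\mathrm{src}}$ of no larger cost. Combining these with the $\alpha$-approximation guarantee of $\ALG$ gives
\[
	\cost{\ALG'(\sigma)} \;\le\; \alpha\cdot\OPT_{k\text{-server}}(\sigma_{\mathrm{src}}) + 2\,\OPT_{\mathrm{Uber}} \;\le\; (\alpha+2)\,\OPT_{\mathrm{Uber}},
\]
which is the claim.

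The step that requires genuine care, and the main obstacle, is the inequality $\OPT_{k\text{-server}}(\sigma_{\mathrm{src}})\le \OPT_{\mathrm{Uber}}$ in the \emph{stochastic} (online-optimum) regime: the optimal online Uber policy gets to see $d_i$ at the time it decides how to place a server on $s_i$, whereas an online $k$-server algorithm for the sources does not, so one cannot literally ``project out the destinations'' from the optimal online Uber policy. I would bridge this with a coupling. From the optimal online Uber policy $\mathcal B$, build an online $k$-server policy $\mathcal B'$ for $\sigma_{\mathrm{src}}$ that, upon seeing $s_i$, samples a fake destination $\hat d_i$ from the conditional distribution of $d_i$ given $s_i$ and then simulates $\mathcal B$ \emph{in full} on $(s_i,\hat d_i)$, detour included, so that $\mathcal B'$ stays synchronized with its simulated copy of $\mathcal B$ and in particular serves $s_i$. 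Since $(s_i,\hat d_i)$ has exactly the distribution of a genuine demand, $\E[\cost{\mathcal B'}]=\OPT_{\mathrm{Uber}}$, while $\mathcal B'$ is a feasible online algorithm for $\sigma_{\mathrm{src}}$, giving the desired inequality.

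Finally, in the adversarial (competitive) setting no coupling is needed, since every inequality above already holds pointwise against the offline optimum; the same construction then yields an $(\alpha+2)$-competitive algorithm. The argument also carries over to the stochastic $k$-server results of Theorems~\ref{thm:mainline} and~\ref{thm:maingeneral}, yielding the $5$- and $O(\log n)$-approximations for stochastic Uber announced in the introduction, and (taking the conditional distribution of $\hat d_i$ over the full history in the coupling) to the correlated setting of Theorem~\ref{thm:correalted}.
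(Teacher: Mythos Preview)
Your proof is correct and follows exactly the paper's reduction: run $\ALG$ on the sequence of sources alone, detour to the destination and back after each service, then bound the two cost pieces by $\alpha\cdot\OPT_{\mathrm{Uber}}$ and $2\cdot\OPT_{\mathrm{Uber}}$ via the inequalities $\OPT_{\mathrm{Uber}}\ge\OPT_{k\text{-server}}(\sigma_{\mathrm{src}})$ and $\OPT_{\mathrm{Uber}}\ge\sum_i d(s_i,d_i)$. The paper asserts both inequalities without further comment, so your coupling argument for the stochastic regime (sampling a fake $\hat d_i$ from the conditional distribution so that the projected policy remains a legitimate online $k$-server algorithm) is in fact more careful than the original.
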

\begin{proof}
	Consider an instance of the Uber problem $I_U$. Let $s_i$ and $t_i$ denote the $i$-th source and destination, respectively. We generate an instance of the $k$-server problem $I_k$ by removing every $t_i$ from $I_U$. In other words the demands are $s_i$'s. We use $\ALG$ to provide a solution for $I_U$ as follows. For satisfying the $i$-th demand, we use $\ALG$ to move a server to $s_i$. Then using the shortest path from $s_i$ to $t_i$, we move that server to $t_i$ and then return it back to $s_i$. Let $\OPT_U$ and $\OPT_k$ denote the cost of the optimal solutions for $I_U$ and $I_k$, respectively. Let $d(s_i, t_i)$ denote the distance of $t_i$ from $s_i$ in the metric. Let $C$ denote the total movement of the servers. We have,
	\begin{align*}
		&\OPT_U \geq \OPT_k.\\
		&\OPT_U \geq \sum_i d(s_i, t_i).\\
		&C \leq \alpha \OPT_k + 2\sum_i d(s_i, t_i) \leq (\alpha+2) \OPT_U.
	\end{align*}

\end{proof}

\subsection{Further Related Work}\label{sec:related}

The randomized algorithms often perform much better in the online paradigm. For the $k$-server problem, a lower bound of $\Omega(\log k)$ is shown by \cite{karloff1994lower} for the competitive ratio of randomized algorithms in most common metrics. Despite the exponential gap, compared to the lower bound of deterministic algorithms, very little is known about the competitiveness of randomized algorithms. In fact, the only known algorithms with competitive ratios below $k$, work either in the uniform metric (also known as the paging problem~\cite{fiat1991competitive,mcgeoch1991strongly,achlioptas2000competitive,bansal2012primal}), a metric comprising $k+1$ points~\cite{fiat2003better}, and two servers on the line~\cite{bartal2000randomized}. Two decades after the introduction of the $k$-server problem, a major breakthrough was achieved by Bansal \textit{et al.} \cite{bansal2011polylogarithmic} in discrete metrics with sub-exponential size. If $\Mc{M}$ comprise $n$ points, their randomized algorithm achieves a competitive ratio of $O(\log^3 n \log^2 k)$.

The case of uniform metric has been extensively studied under various stochastic models motivated by the applications in computer caching.
Koutsoupias and Papadimitriou~\cite{KoutP95} consider two refinements of the competitive analysis for server problems. First, they consider the \textit{diffuse adversary} model. In this model, at every step $i$ the adversary chooses a distribution $D_i$ over the uniform metric of the paging problem.  Then the $i^{th}$ request is drawn from $D_i$ which needs to be served. The distribution $D_i$ is not known to the online algorithm and it may depend on the previous requests. However, in their paper, they consider the case wherein it is guaranteed that for every point $p$, $D_i(p)\leq \epsilon$ for a small enough $\epsilon$; i.e., the next request is not predictable with absolute certainty for the adversary. The results of Koutsoupias and Papadimitriou and later Young~\cite{Young98} shows that the optimum competitive ratio in this setting is close to $1+\Theta(k\epsilon)$.

The second refinement introduced in \cite{KoutP95} restricts the optimal solution to having lookahead at most $\ell$. Hence, one can define a \textit{comparative ratio} which indicates the worst-case ratio of the cost of the best online solution to the best solution with lookahead $\ell$. They show that for the $k$-server problem, and more generally the metrical task system problem, there are online algorithms that admit a comparative ratio of $2\ell+1$; for some instances this ratio is tight.

Various other models of restricting the adversary (access graph model~\cite{borodin1995competitive,irani1996strongly,fiat1997truly}, fault rate model~\cite{karlin2000markov,albers2002paging,denning1983working}, etc) have also been considered for the paging problem (see \cite{Panag06,becchetti2004modeling} and references therein for a further survey of these results).
Unfortunately, many of the stochastic settings considered for the paging problem do not seem to have a natural generalization beyond the uniform metric setting. For example, in the diffuse adversary model, most of the studied distributions do not weaken the adversary in the general metric.
In this paper, we look for polynomial-time \textit{approximation} algorithms in the class of online algorithms that have access to the distributions.

We would like to mention that various online problems have been previously considered under prophet inequality model or i.i.d. model (where all distributions are identical). The maximum matching problem, scheduling, and online network design has been extensively studied in these models(see e.g. \cite{alaei2012online,alaei2011adcell,alaei2013online, dehghani2015online, abolhasani2017beating, dehghani2017online}).
In the graph connectivity problems, Garg, Gupta, Leonardi,and Sankowski~\cite{GargG08} consider the online variants of Steiner tree and several related problems under the i.i.d. stochastic model. In the adversarial model, there exists an $\Omega(\log n)$ lower bound on the competitive ratio of any online algorithm, where $n$ is the number of demands. However, Garg \textit{et al.} show that under the i.i.d. assumption, these problems admit online algorithms with constant or $O(\log \log n)$ competitive ratios. We refer the reader to the excellent book by Borodin and El-Yaniv~\cite{borodin2005online} for further study of online problems.

\section{Preliminaries}
In this section we formally define the stochastic $k$-server problem. The classical $k$-server problem is defined on a metric $\metric$ which consists of points that could be infinitely many. For every two points $x$ and $y$ in metric $\metric$, let $d(x,y)$ denote the distance of $x$ from $y$ which is a symmetric function and satisfies the triangle inequality. More precisely for every three points $x$, $y$, and $z$ we have
\begin{align}
	& d(x,x) = 0 \\
	& d(x,y) = d(y,x)\\
	& d(x,y) + d(y,z) \geq d(x,z).
\end{align}

In the $k$-server problem the goal is to place $k$ servers on $k$ points of the metric, and move these servers to satisfy the requests. We refer to every placement of the servers on the metric points by a \textit{configuration}. Let $\req = \langle r_1,r_2,\ldots,r_t\rangle$ be a sequence of requests, the goal of the $k$-server problem is to find configurations $\langle A_0,A_1,A_2,\ldots,A_t\rangle$ such that for every $i$ there exists a server on point $r_i$ in configuration $A_i$. We say such a list of configurations is \textit{valid} for the given list of requests. A valid sequence of configurations is optimal if $\sum d(A_{i-1},A_i)$ is minimized where $d(X,Y)$ stands for the minimum cost of moving servers from configuration $X$ to configuration $Y$. An optimal sequence $\langle A_0,A_1,\ldots,A_t\rangle$ of configurations is called an optimal offline solution of $\KO(\metric,\req)$ when $\req$ is known in advance. We refer to the optimal cost of such movements with $|\KO(\metric,\req)| = \sum d(A_{i-1},A_i)$. 

We also define the notion of \textit{fractional configuration} as an assignment of the metric points to non-negative real numbers. More precisely, each number specifies a mass of fractional server on a point. Every fractional solution adheres to the following condition: The total sum of the values assigned to all points is exactly equal to $k$. Analogously, a fractional configuration serves a request $r_i$ if there is a mass of size at least 1 of server assigned to point $r_i$. An offline fractional solution of the $k$-server problem for a given sequence of requests $\rho$ is defined as a sequence of fractional configurations $\langle A_0, A_1, \ldots, A_t\rangle$ such that $A_i$ serves $r_i$.

In the online $k$-server problem, however, we are not given the whole sequence of requests in the beginning, but we will be informed of every request once its realization is drawn. An algorithm $\A$ is an online algorithm for the $k$-server problem if it reports a configuration $A_0$ as an initial configuration and upon realization of every request $r_i$ it returns a configuration $A_i$ such that $\langle A_0, A_1, \ldots, A_i\rangle$ is valid for $\langle r_1,r_2,\ldots,r_i\rangle$. If $\A$ is deterministic, it generates a unique sequence of configurations for every sequence of requests. Let $\A(\metric,\req)$ be the sequence that $\A$ generates for requests in $\req$ and $|\A(\metric,\req)|$ denote its cost. 

In the online stochastic $k$-server problem, in addition to metric $\metric$, we are also given  $t$ independent probability distributions $\langle P_1,P_2,\ldots,P_t \rangle$ which show the probability that every request $r_i$ is realized on a point of the metric at each time. An algorithm $\A$ is an online algorithm for such a setting, if it generates a configuration for every request $r_i$ not solely based on $\langle r_1, r_2, \ldots, r_i\rangle$ and $\langle A_0, A_1, \ldots, A_{i-1}\rangle$ but also with respect to the probability distributions. Similarly, we define the cost of an online algorithm $\A$ for a given sequence of requests $\req$ with $|\A(\metric,\req,\langle P_1, P_2, \ldots, P_t\rangle)|$. We define the expected cost of an algorithm $\A$ on metric $\metric$ and with probability distributions $\langle P_1, P_2, \ldots, P_t\rangle$ by 
\begin{equation*}
	|\A(\metric,\langle P_1, P_2, \ldots, P_t\rangle)| = \mathbb{E}_{\forall i, r_i \sim P_i} |\A(\metric,\req,\langle P_1,P_2,\ldots,P_t \rangle)|.
\end{equation*}
For every metric $\metric$ and probability distributions $\langle P_1, P_2, \ldots, P_t\rangle$ we refer to the online algorithm with the minimum expected cost by $\OPT_{\metric,\langle P_1, P_2, \ldots, P_t\rangle}$.

An alternative way to represent a solution of the $k$-server problem is as a vector of configurations $\langle B_0, B_1, \ldots, B_t \rangle$ such that $B_i$ does not necessarily serve request $r_i$. The cost of such solution is equal to $\sum d(B_{i-1},B_i) + \sum 2  d(B_i, r_i)$ where $d(B_i, r_i)$ is the minimum distance of a server in configuration $B_i$ to request $r_i$. The additional cost of $2 d(B_i, r_i)$ can be thought of as moving a server from $B_i$ to serve $r_i$ and returning it back to its original position. Thus, every such representation of a solution can be transformed to the other representation. Similarly, $d(B_i, r_i)$ for a fractional configuration $B_i$ is the minimum cost which is incurred by placing a mass 1 of server at point $r_i$. We use letter $B$ for the configurations of such solutions throughout the paper.

In this paper the emphasis is on the stochastic $k$-server problem on the line metric. We define the line metric $\Line$ as a metric of points from $-\infty$ to $+\infty$ such that the distance of two points $x$ and $y$ is always equal to $|x-y|$. Moreover, we show that deterministic algorithms are as powerful as randomized algorithms in this setting, therefore we only focus on deterministic algorithms in this paper. Thus, from here on, we omit the term deterministic and every time we use the word algorithm we mean a deterministic algorithm unless otherwise is explicitly mentioned.

\section{Structural Characterization}
Recall that an online algorithm $\A$ has to fulfill the task of reporting a configuration $A_i$ upon arrival of request $r_i$ based on $\langle A_0, A_1, \ldots, A_{i-1} \rangle$, $\langle r_1, r_2, \ldots, r_i\rangle$, and $\langle P_1, P_2, \ldots, P_t\rangle$. We say an algorithm $\B$ is \textit{request oblivious}, if it reports  configuration $B_i$ regardless of request $r_i$. As such, $\B$ generates configurations $\langle B_0, B_1, \ldots, B_t\rangle$ for a sequence of requests $\langle r_1, r_2, \ldots, r_t\rangle$ and the cost of such configuration is $\sum d(B_{i-1},B_i) + \sum 2  d(B_i, r_i)$. More precisely, no matter what request $r_i$ is, $\B$ will generate the same configuration for a given list of past configurations $\langle B_0, B_1, \ldots, B_{i-1}\rangle$, a given sequence of past requests $\langle r_1, r_2, \ldots, r_{i-1}\rangle$, and the sequence of probability distributions $\langle P_1, P_2, \ldots, P_t\rangle$. In the following we show that every online algorithm $\A$ can turn into a request oblivious algorithm $\B_{\A}$ that has a cost of at most $|3\A(\metric,\req,\langle P_1, P_2, \ldots, P_t\rangle)|$ for a given sequence of requests $\req$.
\begin{lemma}\label{lm1}
	Let $\A$ be an online algorithm for the stochastic $k$-server problem. For any metric $\metric$, there exists a request oblivious algorithm $\B_{\A}$ such that $$|\B_{\A}(\metric,\langle P_1, P_2, \ldots, P_t\rangle)| \leq 3|\A(\metric,\langle P_1, P_2, \ldots, P_t\rangle)|.$$
\end{lemma}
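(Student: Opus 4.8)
The plan is to take an arbitrary online algorithm $\A$ and convert it into a request-oblivious algorithm $\B_{\A}$ by "delaying" the service move. Recall that on a sample request sequence $\req = \langle r_1, \ldots, r_t\rangle$, $\A$ produces configurations $\langle A_0, A_1, \ldots, A_t\rangle$ where $A_i$ serves $r_i$. The key observation is that $\A$'s decision $A_i$ depends on $r_i$ only through the fact that $A_i$ must contain a server at $r_i$. So I would define $\B_{\A}$ to simulate $\A$, but to separate the "anticipatory" part of each move from the "service" part. Concretely, when $\A$ moves from $A_{i-1}$ to $A_i$ to serve $r_i$, I want $\B_{\A}$ to move to a configuration $B_i$ that is "morally $A_i$ but without the last-second dash to $r_i$," then pay $2d(B_i, r_i)$ to serve $r_i$ and come back, so that $\B_{\A}$'s next state is again request-oblivious given the past.

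The cleanest way to make this precise is as follows. Let $\langle A_0, \ldots, A_t \rangle$ be the (random) configurations of $\A$. For each $i$, let $s_i$ be the server in $A_i$ that sits on $r_i$, and let $A_i'$ be the configuration obtained from $A_i$ by moving $s_i$ back to the position it occupied in $A_{i-1}$ (i.e., undoing only the final service step of $\A$'s $i$-th move). Then $A_i'$ does not depend on $r_i$ — it is determined by $A_{i-1}$ and the portion of $\A$'s move that would have happened anyway. I would set $B_i = A_i'$; this defines a request-oblivious algorithm since $B_i$ is a deterministic function of $\langle B_0, \ldots, B_{i-1}\rangle$, $\langle r_1,\ldots,r_{i-1}\rangle$, and the distributions (one has to check that reconstructing $\A$'s trajectory from the $B$'s is possible — since $B_{i-1}$ together with $r_{i-1}$ determines $A_{i-1}$, and the distributions determine $\A$'s behavior, this goes through by induction). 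Now I compare costs. In the $B$-representation the cost is $\sum_i d(B_{i-1}, B_i) + \sum_i 2\, d(B_i, r_i)$. For the first sum: $d(B_{i-1}, B_i) = d(A_{i-1}', A_i')$, and by the triangle inequality for the transportation metric on configurations, $d(A_{i-1}', A_i') \le d(A_{i-1}', A_{i-1}) + d(A_{i-1}, A_i) + d(A_i, A_i')$. The middle term is exactly $\A$'s $i$-th move cost. The term $d(A_i, A_i')$ is the length of moving $s_i$ back, which is at most $d(A_{i-1}, r_i) \le d(A_{i-1}, A_i)$ (the server reached $r_i$ from $A_{i-1}$, so that one-server displacement is bounded by $\A$'s $i$-th move). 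Similarly $d(A_{i-1}', A_{i-1})$ is bounded by $\A$'s $(i-1)$-th move cost. For the second sum: $d(B_i, r_i) = d(A_i', r_i)$, and again this equals $d(A_{i-1}$-position of $s_i, r_i)$ which is at most $\A$'s $i$-th move cost. Summing everything and being slightly careful to charge each of $\A$'s moves a bounded number of times, the total is at most $3\sum_i d(A_{i-1}, A_i) = 3|\A(\metric, \req, \langle P_1,\ldots,P_t\rangle)|$. Taking expectations over $\req \sim \langle P_i\rangle$ gives the claimed bound.

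The main obstacle I anticipate is the bookkeeping in the charging argument: making sure that when I expand each $d(B_{i-1}, B_i)$ via the triangle inequality, the auxiliary terms $d(A_{i-1}', A_{i-1})$ and $d(A_i, A_i')$ are charged to $\A$'s moves without double-counting across consecutive $i$, so that the constant really comes out to $3$ rather than something larger. A secondary subtlety is verifying that $\B_{\A}$ as defined is genuinely request-oblivious in the formal sense of the definition — i.e., that $B_i$ is computable from the past configurations and past requests alone — which requires noting that $\A$'s own state $A_{i-1}$ is recoverable from $(B_{i-1}, r_{i-1})$ (since $A_{i-1}$ differs from $B_{i-1}$ only by the single server move that served $r_{i-1}$, and that move's endpoint is $r_{i-1}$). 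I would also remark that this construction is exactly what sets up the later definition of non-adaptive algorithms, where the pre-computed configurations $A_i$ in Theorem~\ref{thm:nonadaptive} play the role of these $B_i$'s.
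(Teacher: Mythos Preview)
Your construction has a genuine gap: the claim that $A_i'$ does not depend on $r_i$ is false for a general online algorithm $\A$. You write that ``$\A$'s decision $A_i$ depends on $r_i$ only through the fact that $A_i$ must contain a server at $r_i$,'' but nothing in the definition of an online algorithm forces this. Upon seeing $r_i$, $\A$ may reposition \emph{all} $k$ servers in a way that depends on $r_i$; the only constraint is that some server ends up at $r_i$. Consequently, taking $A_i$ and moving the serving server $s_i$ back to its $A_{i-1}$ position still leaves the other $k-1$ servers in positions that were chosen after seeing $r_i$. Your $B_i = A_i'$ is therefore not request oblivious, and the inductive recoverability argument (``$A_{i-1}$ is recoverable from $(B_{i-1}, r_{i-1})$'') does not get off the ground either, since $B_{i-1}$ already encodes information about $r_{i-1}$ beyond the single service move.

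Even setting that aside, your charging does not yield the constant $3$. Expanding $d(B_{i-1},B_i)\le d(A_{i-1}',A_{i-1})+d(A_{i-1},A_i)+d(A_i,A_i')$ and summing, each of the three terms contributes at most $|\A|$, so the configuration-movement sum alone is already $3|\A|$; adding $2\sum_i d(B_i,r_i)\le 2|\A|$ gives $5|\A|$, not $3|\A|$.

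The paper sidesteps both issues with a much simpler construction: set $B_i = A_{i-1}$, i.e., $\B_\A$ simply lags one step behind $\A$. This is trivially request oblivious because $A_{i-1}$ is determined by $r_1,\ldots,r_{i-1}$ alone. Then $\sum_i d(B_{i-1},B_i)=\sum_i d(A_{i-2},A_{i-1})\le |\A|$, and since $A_i$ contains a server at $r_i$ we have $d(B_i,r_i)=d(A_{i-1},r_i)\le d(A_{i-1},A_i)$, so $2\sum_i d(B_i,r_i)\le 2|\A|$. The total is exactly $3|\A|$, with no delicate bookkeeping required.
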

\begin{proof}
	Let $\req$ be a sequence of requests. We define online algorithm $\B_{\A}$ as follows: The configuration that $\B_{\A}$ reports for a given list of input arguments $\langle B_0, B_1, \ldots, B_i \rangle$, $\langle r_1, r_2, \ldots, r_i\rangle$, and $\langle P_1, P_2, \ldots, P_t\rangle$ is the output of algorithm $\A$ on inputs $\langle B_0, B_1, \ldots, B_i \rangle$, $\langle r_1, r_2, \ldots, r_{i-1}\rangle$, and $\langle P_1, P_2, \ldots, P_t\rangle$ (The same input except that $r_i$ is dropped from the sequence of requests). We show the cost of such algorithm for input $\req$ is at most 3 times the cost of $\A$ for the same input.
	
	Let $\langle A_0, A_1, \ldots, A_t\rangle$ be the sequence of configurations that $\A$ generates for requests $\req$ and $\langle B_0, B_1, \ldots, B_t\rangle$ be the output of algorithm $\B_{A}$. According to the construction of $\B_{\A}$, $B_0 = A_0$ and $B_i = A_{i-1}$ for all $1 \leq i \leq t$. Note that for algorithm $\A$, we assume every $A_i$ serves request $r_i$. By definition, the cost of solution $\langle B_0, B_1, B_2, \ldots, B_t\rangle$ is equal to $\sum d(B_{i-1},B_i) + 2\sum d(B_i, r_i)$. Since $B_0 = B_1 = A_0$ and $B_i$ = $A_{i-1}$, 
	\begin{equation}\label{cc1}
		\sum_{i=1}^{t} d(B_{i-1},B_i) = \sum_{i=1}^{t-1} d(A_{i-1},A_i) \leq \sum_{i=1}^t d(A_{i-1},A_i) = |\A(\metric, \req, \langle P_1, P_2, \ldots, P_t\rangle)|.
	\end{equation}
	Moreover, since every $A_i$ servers request $r_i$, $d(B_i,r_i) \leq d(B_i, A_i) = d(A_{i-1},A_i)$. Hence,
	\begin{equation}\label{cc2}
		2\sum_{i=1}^t d(B_i, r_i) \leq 2\sum_{i=1}^t d(B_i, A_i) = 2\sum_{i=1}^t d(A_{i-1},A_i) = 2|\A(\metric,\req,\langle P_1, P_2, \ldots, P_t\rangle)|.
	\end{equation}
	Inequality \eqref{cc1} along with Equation \eqref{cc2} implies $$|\B_\A(\metric,\req,\langle P_1, P_2, \ldots, P_t\rangle)| \leq 3|\A(\metric,\req,\langle P_1, P_2, \ldots, P_t\rangle)|.$$
	Since this holds for all requests $\req \sim \langle P_1, P_2, \ldots, P_t\rangle$, we have $$|\B_\A(\metric,\langle P_1, P_2, \ldots, P_t\rangle)| \leq 3|\A(\metric, \langle P_1, P_2, \ldots, P_t\rangle)|$$ and the proof is complete.
\end{proof}

An immediate corollary of Lemma \ref{lm1} is that the optimal request oblivious algorithm has a cost of at most $|3\OPT_{\metric,\langle P_1, P_2, \ldots, P_t\rangle}(\metric, \langle P_1, P_2, \ldots, P_t\rangle)|$. Therefore, if we only focus on the request oblivious algorithms, we only lose a factor of 3 in comparison to the optimal online algorithm. The following lemma states a key structural lemma for an optimal request oblivious algorithm.

\begin{lemma}\label{mohem}
	For every request oblivious algorithm $\B$, there exists a randomized request oblivious algorithm $\B'$ with the same expected cost which is not only oblivious to the last request, but also oblivious to all requests that have come prior to this.
\end{lemma}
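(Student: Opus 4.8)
The plan is to obtain $\B'$ by running $\B$ on a freshly sampled \emph{phantom} request sequence drawn from the given distributions, and then to show, via independence of the $P_i$'s, that replacing the real input by the phantom input changes neither the expected movement cost nor the expected service cost. As a first step I would unroll $\B$: since $\B$ is deterministic and request oblivious, the configuration $B_i$ it reports depends only on $\langle B_0,\ldots,B_{i-1}\rangle$, $\langle r_1,\ldots,r_{i-1}\rangle$ and the fixed $\langle P_1,\ldots,P_t\rangle$. Noting that $B_0$ (and hence $B_1$) is a constant and substituting recursively, one gets by induction on $i$ a deterministic map $f_i$ with $B_i=f_i(r_1,\ldots,r_{i-1})$; this induction uses determinism crucially.

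Next I would define $\B'$. Before processing any request, $\B'$ draws its internal randomness: an independent phantom sequence $\langle\hat r_1,\ldots,\hat r_t\rangle$ with $\hat r_i\sim P_i$. It then outputs $B'_i:=f_i(\hat r_1,\ldots,\hat r_{i-1})$ for every $i$ --- equivalently, it simulates $\B$ on the phantom sequence and copies the resulting configurations --- completely ignoring the actual requests $r_1,\ldots,r_t$. Thus $\B'$ is a randomized algorithm that is oblivious not only to $r_i$ but to all of $r_1,\ldots,r_{i-1}$ as well, as required.

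The cost comparison I would carry out on the single product space carrying the real requests $r_i\sim P_i$ together with the phantom coins $\hat r_i\sim P_i$, all mutually independent. On this space the cost of $\B'$ is $\sum_i d(B'_{i-1},B'_i)+2\sum_i d(B'_i,r_i)$. The movement term $\sum_i d(B'_{i-1},B'_i)$ is exactly the same deterministic function of $\langle\hat r_1,\ldots,\hat r_{t-1}\rangle$ that $\sum_i d(B_{i-1},B_i)$ is of $\langle r_1,\ldots,r_{t-1}\rangle$, and these sequences have the same law, so the terms have equal expectation. For the service term, fix $i$: $B'_i=f_i(\hat r_1,\ldots,\hat r_{i-1})$ depends only on the phantom coins $\hat r_1,\ldots,\hat r_{i-1}$, which are independent of $r_i$, while in a run of $\B$ the request $r_i\sim P_i$ is independent of $r_1,\ldots,r_{i-1}$ because $P_1,\ldots,P_t$ are independent; since $\langle\hat r_1,\ldots,\hat r_{i-1}\rangle$ and $\langle r_1,\ldots,r_{i-1}\rangle$ have the same law, the pairs $(B'_i,r_i)$ and $(B_i,r_i)$ have the same joint law, and therefore
\[ \E\big[d(B'_i,r_i)\big]=\E_{\langle\hat r_1,\ldots,\hat r_{i-1}\rangle,\, r_i}\big[d(f_i(\hat r_1,\ldots,\hat r_{i-1}),r_i)\big]=\E_{\langle r_1,\ldots,r_i\rangle}\big[d(f_i(r_1,\ldots,r_{i-1}),r_i)\big]=\E\big[d(B_i,r_i)\big]. \]
Summing over $i$ and adding the movement term gives $|\B'(\metric,\langle P_1,\ldots,P_t\rangle)|=|\B(\metric,\langle P_1,\ldots,P_t\rangle)|$.

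The step I expect to be the main obstacle is the service-term identity: one must keep the phantom coins $\hat r_1,\ldots,\hat r_{i-1}$ and the real request $r_i$ on a common probability space and invoke independence of the distributions, so that a configuration assembled from phantom requests is on average as good against a real request as $\B$'s own configuration is. Everything else is bookkeeping; I would also remark that the randomization in $\B'$ is genuinely needed, since it serves precisely to absorb the variability of the hidden request history that the request-oblivious $\B$ was allowed to depend on.
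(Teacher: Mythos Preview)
Your proof is correct and follows essentially the same approach as the paper: both construct $\B'$ by simulating $\B$ on a phantom request sequence drawn from the $P_i$'s, and both rely on independence of the $P_i$'s so that the pair $(B_i,r_i)$ has the same law whether $B_i$ is built from the real history or the phantom one. The only cosmetic difference is that the paper resamples the phantom prefix at each step conditioned on the configurations produced so far, whereas you draw the entire phantom sequence once upfront; since $\B$ is deterministic these induce the same distribution on $\langle B'_0,\ldots,B'_t\rangle$, and your formulation is arguably cleaner.
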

\begin{proof}
	For any given request oblivious online algorithm $\B$, we construct an online algorithm $\B'$ which is oblivious to all of the requests as follows: For an input $\langle B_1, B_2, \ldots, B_{i-1}\rangle$ of configurations and probability distributions $\langle P_1, P_2, \ldots, P_t\rangle$, draw a sequence of requests $\langle r_1, r_2, \ldots, r_i\rangle$ from $\langle P_1, P_2, \ldots, P_t\rangle$ conditioned on the constraint that $\B$ would generate configurations $\langle B_1, B_2, \ldots, B_{i-1}\rangle$ for requests $\langle r_1, r_2, \ldots, r_{i-1}\rangle$. Now, report the output of $\B$ for inputs $\langle B_1, B_2, \ldots, B_{i-1}\rangle$, $\langle r_1, r_2, \ldots, r_i\rangle$, and $\langle P_1, P_2, \ldots, P_t\rangle$.
	
	We define the cost of step $i$ of algorithm $B'$ as $d(B_{i-1},B_{i}) + 2d(B_i,r_i)$. Due to the construction of algorithm $\B'$, the expected cost of this algorithm at every step $i$ for a random sequence of requests is equal to the expected cost of algorithm $\B$ for a random sequence of requests drawn from $\langle P_1, P_2, \ldots, P_t\rangle$. Therefore, the expected cost of both algorithms for a random sequence of requests are equal and thus $|\B(\metric,\langle P_1, P_2, \ldots, P_t\rangle)| = |\B'(\metric,\langle P_1, P_2, \ldots, P_t\rangle)|$.
\end{proof}

Lemma \ref{mohem} states that there always exists an optimal randomized request oblivious online algorithm that returns the configurations regardless of the requests. We call such an algorithm \textit{non-adaptive}. Since a non-adaptive algorithm is indifferent to the sequence of the requests, we can assume it always generates a sequence of configurations just based on the distributions. For an optimal of such algorithms, all such sequence of configurations should be optimal as well. Therefore, there always exists an optimal non-adaptive online algorithm which is deterministic. By Lemma \ref{lm1} not only do we know the optimal request oblivious algorithm is at most 3-approximation, but also the same holds for the optimal non-adaptive algorithm. 
\begin{theorem}\label{thebest}
	There exists a sequence of configurations $\langle B_0, B_1, \ldots, B_t\rangle$ such that an online algorithm which starts with $B_0$ and always returns configuration $B_i$ upon arrival of request $r_i$ has an opproximation factor of at most 3. 
\end{theorem}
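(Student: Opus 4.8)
The plan is to assemble Theorem~\ref{thebest} directly from the three ingredients already in hand, with no new idea beyond a standard averaging step. Take $\A$ to be the optimal online algorithm $\OPT_{\metric,\langle P_1,\ldots,P_t\rangle}$; its expected cost is by definition the benchmark we are comparing against. First I would apply Lemma~\ref{lm1} with this choice of $\A$ to obtain a request-oblivious algorithm $\B_\A$ with $|\B_\A(\metric,\langle P_1,\ldots,P_t\rangle)| \le 3\,|\OPT_{\metric,\langle P_1,\ldots,P_t\rangle}(\metric,\langle P_1,\ldots,P_t\rangle)|$. Then I would feed $\B_\A$ into Lemma~\ref{mohem} to get a randomized algorithm $\B'$ that is oblivious not just to the current request but to the entire request history, with the same expected cost; so $\B'$ is still within a factor $3$ of the optimal online algorithm.

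The one remaining point is derandomization. Since $\B'$ ignores the requests entirely, all it does is use its internal randomness to sample a sequence of configurations $\langle B_0,B_1,\ldots,B_t\rangle$ from some distribution $\mathcal{D}$ that depends only on $\langle P_1,\ldots,P_t\rangle$. Its expected cost is then $\mathbb{E}_{\langle B_i\rangle\sim\mathcal{D}}\,\mathbb{E}_{\forall i,\,r_i\sim P_i}\big[\sum_i d(B_{i-1},B_i)+2\sum_i d(B_i,r_i)\big]$. By an averaging argument there is a single sequence $\langle B_0,B_1,\ldots,B_t\rangle$ in the support of $\mathcal{D}$ whose request-averaged cost is at most this expectation, i.e.\ at most $3\,|\OPT_{\metric,\langle P_1,\ldots,P_t\rangle}(\metric,\langle P_1,\ldots,P_t\rangle)|$. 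Fixing that sequence yields a deterministic non-adaptive algorithm.

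Finally I would note that a deterministic non-adaptive algorithm is exactly the object described in the statement: it starts in configuration $B_0$, and on arrival of $r_i$ it moves the servers to $B_i$ and serves $r_i$ by sending the nearest server to $r_i$ and back, which is precisely the $2d(B_i,r_i)$ term in the request-oblivious cost accounting (equivalently, one may reinterpret this as the configuration-sequence representation discussed in the Preliminaries). Hence this fixed sequence realizes an online algorithm with approximation factor at most $3$.

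There is no deep obstacle here; the only place needing a little care is matching conventions — checking that the ``$2d(B_i,r_i)$'' cost in the request-oblivious representation genuinely corresponds to ``output $B_i$, then serve $r_i$ with the closest server and return it,'' so that the quantity bounded by Lemmas~\ref{lm1} and~\ref{mohem} is the same quantity claimed in Theorem~\ref{thebest} — and spelling out the averaging step that turns the randomized $\B'$ into a single deterministic configuration sequence.
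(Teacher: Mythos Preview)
Your proposal is correct and matches the paper's own argument essentially step for step: apply Lemma~\ref{lm1} to the optimal online algorithm, then Lemma~\ref{mohem} to make it oblivious to all requests, then derandomize. The only cosmetic difference is that the paper phrases the derandomization as ``for an optimal randomized non-adaptive algorithm, every sequence in its support must itself be optimal,'' whereas you use the equivalent averaging formulation.
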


\section{Fractional Solutions}\label{fsf}

In this section we provide a fractional online algorithm for the $k$-server problem that can be implemented in polynomial time. Note that by Theorem \ref{thebest} we know that there exist configurations $\langle \B_1, \B_2, \ldots, \B_t\rangle$ such that the expected cost of a non-adaptive algorithm that always returns these configurations is at most 3 times the cost of an optimal online algorithm. Therefore, we write an integer program to find such configurations with the least expected cost. Next, we provide a relaxed LP of the integer program and show that every feasible solution of such LP corresponds to a fractional online algorithm for the stochastic $k$-server problem. Hence, solving such a linear program, that can be done in polynomial time, gives us a fractional online algorithm for the problem.


\subsection{Linear Program}
Recall that given $t$ independent distributions $\langle P_1, \ldots, P_t \rangle$ for online stochastic $k$-server, an adaptive algorithm can be represented by $t+1$ configurations $\langle B_0, \ldots, B_t \rangle$. Upon the arrival of each request $r_i$, we move the servers from configuration $B_{i-1}$ to $B_i$ and then one server serves $r_i$ and goes back to its position in $B_i$. The objective is to find the configurations such that the cost of moving to new configurations in addition to the expected cost of serving the requests is minimized. Therefore the problem can formulated in an offline manner. First we provide an integer program in order to find a vector of configurations with the least cost.

The decision variables of the program represent the configurations, the movement of servers from one configuration to another, and the way that each possible request is served. In particular,
at each time step $\tau$:
\begin{itemize}
	\item For each node $v$ there is a variable $b_{\tau, v} \in N$ denoting the number of servers on node $v$. 
	\item For each pair of nodes $u$ and $v$, there is a movement variable $f_{\tau, u, v} \in N$ denoting the number of servers going from $u$ to $v$ for the next round.
	\item For each node $v$ and possible request node $r$, there is a variable $x_{\tau, v, r} \in \{0, 1\}$ denoting whether $r$ is served by $v$ or not.
\end{itemize}

In the following integer program, the first set of constraints ensures the number of servers on nodes at each time is updated correctly according to the movement variables. The second set of constraints ensures that each possible request is served by at least one server. The third set of constraints ensures that no possible request is served by an empty node. By the definition, the cost of a sequence of configurations $\langle B_0, \ldots, B_t \rangle$ is $\sum_{i=1}^t d(B_{i-1}, B_i)+2 \sum_{i=1}^t d(B_i, r_i)$. Thus the objective is to minimize the expression $$\sum_\tau\sum_{u, v}f_{\tau, u, v}d(u, v) + 
2 \sum_\tau\sum_v\sum_r x_{\tau, v, r}\Pr(z \sim P_\tau=r)d(v, r)$$, where $\Pr(z \sim P_\tau=r)$ denotes the probability that $r$ is requested at time $\tau$.
\begin{equation}
	\begin{aligned}\nonumber
		& \text{min.} \hspace{1cm}
		&& \sum_\tau\sum_{u, v}f_{\tau, u, v}d(u, v) + 
		2 \sum_\tau\sum_v\sum_r x_{\tau, v, r}\Pr(z \sim P_\tau=r)d(v, r)\\
		&\forall \tau, v	&& b_{\tau+1, v}=b_{\tau, v} + \sum_u f_{\tau, u, v} - \sum_u 
		f_{\tau, v, u}.\\
		&\forall \tau, u, v	&& \sum_v x_{\tau, v, r} \geq 1.\\
		&\forall \tau, v, r	&& x_{\tau, v, r} \leq b_{\tau, v}.\\
		&\forall \tau	&& \sum_v b_{\tau, v} \leq k.\\
		&\forall \tau, v, r	&& x_{\tau, v, r} \in \{0,1\}.\\
		&\forall \tau, u, v	&& f_{\tau, u, v} \in N.\\ 
		&\forall \tau, v	&& b_{\tau, v} \in N.\\ 
	\end{aligned}
\end{equation}

Now we consider the following relaxation of the above integer program.

\begin{equation}
	\begin{aligned}\nonumber
		& \text{min.} \hspace{1cm}
		&& \sum_\tau\sum_{u, v}f_{\tau, u, v}d(u, v) + 
		2 \sum_\tau\sum_v\sum_r x_{\tau, v, r}\Pr(z \sim P_\tau=r)d(v, r)\\
		&\forall \tau, v	&& b_{\tau+1, v}=b_{\tau, v} + \sum_u f_{\tau, u, v} - \sum_u 
		f_{\tau, v, u}.\\
		&\forall \tau, u, v	&& \sum_v x_{\tau, v, r} \geq 1.\\
		&\forall \tau, v, r	&& x_{\tau, v, r} \leq b_{\tau, v}.\\
		&\forall \tau	&& \sum_v b_{\tau, v} \leq k.
	\end{aligned}
\end{equation}

\section{Reduction from Integral $k$-server to Fractional $k$-server}\label{fred}
In this section we show how we can obtain an integral algorithm for the stochastic $k$-server problem from a fractional algorithm. We first show that every fractional algorithm for the line metric can be modified to an integral algorithm with the same cost. Next, we study the problem on HST metrics; we give a rounding method that produces an integral algorithm from a fractional algorithm while losing a constant factor. Finally, we leverage the previously known embedding techniques to show every metric can be embedded into HST's with a distortion of at most $O(\log n)$. This will lead to a rounding method for obtaining an integral algorithm from every fractional algorithm on general metrics while losing a factor of at most $O(\log n)$. Combining this with the $3$ approximation fractional algorithm that we provide in Section \ref{fsf}, we achieve an $O(\log n)$ approximation algorithm for the stochastic $k$-server problem on general graphs. 

\subsection{Integrals Are as Strong as Fractionals On the Line}
In this section we show every fractional algorithm on the line metric can be derandomized to an integral solution with the same expected cost. The rounding method is as follows: For every fractional configuration $A$, we provide an integral configuration $\I(A)$ such that (i) the distance of two configurations $A_1$ and $A_2$ is equal to the expected distance of two configurations $\I(A_1)$ and $\I(A_2)$. (ii) for every point $x$ in the metric that $A$ has a server mass of size at least $1$ on $x$, there exists a server on point $x$ in $\I(A)$.

Let for every point $x$ in the metric, $A(v)$ denote the amount of server mass on node $v$ of the line. For every fractional configuration $B$, we define a mass function $f_A: (0,k] \to V$ as follows.
$f_A(x)=v_j$ if and only if $j$ is the minimum integer such that $\sum_{i=1}^{j-1}A(i) < x$ and $\sum_{i=1}^{j}A(i) \geq x$.
Intuitively, if one gathers the server mass by sweeping the line from left to right, $f_A(x)$ is the first position on which we have gathered $x$ amount of server mass. The rounding algorithm is as follows:
\begin{itemize}
	\item Pick a random real number $r$ in the interval $[0, 1)$.
	\item $\I(A)$ contains $k$ servers on positions $f_A(r)$, $f_A(r+1)$, $f_A(r+2)$, \ldots, $f_A(r+k-1)$. 
\end{itemize}

Note that the rounding method uses the same $r$ for all of the configurations. More precisely, we draw $r$ from $[0, 1)$ at first and use this number to construct the integral configurations from fractional configurations. The following two lemmas show that both of the properties hold for the rounding algorithm we proposed.
\begin{lemma}
	Let $A$ be a fractional configuration and $x$ be a point such that $A(x) \geq 1$. Then $\I(A)$ has a server on $x$.
\end{lemma}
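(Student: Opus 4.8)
The plan is to unwind the definition of the mass function $f_A$ and the rounding procedure, and observe that the condition $A(x) \geq 1$ forces $x$ to appear among the sampled points $f_A(r), f_A(r+1), \ldots, f_A(r+k-1)$ for \emph{every} choice of $r \in [0,1)$, not just in expectation. First I would fix $x = v_j$ and let $L = \sum_{i=1}^{j-1} A(i)$ denote the total server mass strictly to the left of $x$, so that the mass of $x$ occupies the half-open interval $(L, L + A(x)]$ in the "sweep" parametrization; by definition $f_A(y) = v_j$ precisely for $y \in (L, L+A(x)]$. Since $A(x) \geq 1$, this interval has length at least $1$.

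The key step is then purely arithmetic: I must show that for any $r \in [0,1)$, at least one of the points $r, r+1, \ldots, r+k-1$ lands in the interval $(L, L+A(x)]$. Consider the largest integer $m \ge 0$ with $r + m \le L$ --- equivalently, look at where the arithmetic progression $r, r+1, r+2, \ldots$ (spacing $1$) first crosses the threshold $L$. Because consecutive terms differ by exactly $1$ and the interval $(L, L+A(x)]$ has length $A(x) \ge 1$, the first term of the progression that exceeds $L$ must be at most $L + 1 \le L + A(x)$, hence lies in $(L, L+A(x)]$. It remains to check that this term is one of the $k$ sampled values, i.e. that its index is in $\{0, 1, \ldots, k-1\}$: this holds because $L \le \sum_{i} A(i) - A(x) \le k - 1$ (the total mass is $k$ and $A(x) \ge 1$), so the crossing happens at index at most $k-1$. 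Thus some $f_A(r + \ell) = v_j = x$ with $0 \le \ell \le k-1$, and $\I(A)$ has a server on $x$.

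The main obstacle --- really the only subtlety --- is getting the boundary conventions exactly right: the half-open interval $(L, L+A(x)]$ versus $[L, L+A(x))$, whether $r$ ranges over $[0,1)$ or $(0,1]$, and the "minimum integer $j$" tie-breaking in the definition of $f_A$ when some $A(i) = 0$. I would handle this by carefully matching the strict/non-strict inequalities in the definition $\sum_{i=1}^{j-1} A(i) < x \le \sum_{i=1}^{j} A(i)$ to the endpoints of the progression, and noting that a term exactly equal to $L$ maps (by the minimality of $j$) to a point strictly left of $v_j$, which is why we take the \emph{first} term strictly exceeding $L$. Once the endpoints are pinned down, the argument is a one-line pigeonhole on a unit-spaced progression hitting an interval of length $\ge 1$, and it in fact shows the conclusion holds deterministically in $r$, which is stronger than needed here but will be convenient for the companion lemma on distances.
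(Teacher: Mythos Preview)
Your proof is correct and rests on the same pigeonhole observation as the paper's: a unit-spaced set of sample points must hit any interval of length at least $1$ in the mass parametrization. The paper phrases this as a one-line contrapositive (between any two consecutive servers of $\I(A)$ the fractional mass is strictly less than $1$, so a point carrying mass $\geq 1$ cannot be skipped), whereas you unwind it directly and additionally verify the index bound $\ell \le k-1$; the underlying argument is the same.
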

\begin{proof}
	Due to the construction of our rounding method, for every two consecutive servers $a$ and $b$ in $\I(\A)$, the total mass of servers after $a$ and before $b$ in the fractional solution is less than $1$. Therefore, $\I(A)$ should put a server on point $x$, otherwise the total mass of servers in the fractional solution between the first server before $x$ and the first server after $x$ would be at least $1$. 
\end{proof}

The next lemma shows that the rounding preserves the distances between the configurations in expectation.
\begin{lemma}
	Let $A_1$ and $A_2$ be two fractional configurations and $|A_1 - A_2|$ be their distance. The following holds for the distances of the configurations
	$$\mathbb{E}|\I(A_1) - \I(A_2)| = |A_1 - A_2|.$$
\end{lemma}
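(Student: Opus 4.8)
The plan is to reduce the claim to a one-dimensional coupling statement about the random threshold $r$. For a fractional configuration $A$ and a real $y \in (0,k]$, recall $f_A(y)$ is the position reached after sweeping $y$ units of mass from the left. The key observation I would exploit is that the cost of an optimal matching between two configurations on the line can be computed ``by layers'': if we think of the mass of $A_1$ and $A_2$ as stacked along the sweep parameter $y \in (0,k]$, then the optimal transport cost $|A_1 - A_2|$ on the line equals $\int_0^k |f_{A_1}(y) - f_{A_2}(y)|\, dy$. This is the standard fact that the $L^1$ transport distance between two measures of equal mass on $\mathbb{R}$ is the $L^1$ distance between their ``quantile'' (generalized inverse CDF) functions; here $f_{A_i}$ is exactly that quantile function, and the $k$-server movement cost between fractional configurations is precisely this transport cost. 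So the first step is to establish (or cite) this integral identity for $|A_1 - A_2|$.

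Next I would compute the expectation on the left-hand side. By construction, $\I(A_1)$ and $\I(A_2)$ are matched server-by-server in the natural way: the $j$-th server of $\I(A_1)$ sits at $f_{A_1}(r+j-1)$ and the $j$-th server of $\I(A_2)$ sits at $f_{A_2}(r+j-1)$, for the common random offset $r \in [0,1)$ and $j = 1, \dots, k$. This matching is a valid (not necessarily optimal) transport plan between the integral configurations, so $|\I(A_1) - \I(A_2)| \le \sum_{j=1}^{k} |f_{A_1}(r+j-1) - f_{A_2}(r+j-1)|$; the reverse inequality needs a short argument that on the line this particular order-preserving matching is in fact optimal, which follows because both configurations list their servers in increasing order of position and on a line the sorted matching is optimal. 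Taking expectation over $r \sim \mathrm{Unif}[0,1)$,
\[
\mathbb{E}\,|\I(A_1) - \I(A_2)| = \sum_{j=1}^{k} \int_0^1 |f_{A_1}(r+j-1) - f_{A_2}(r+j-1)|\, dr = \int_0^k |f_{A_1}(y) - f_{A_2}(y)|\, dy,
\]
where the last equality is just the change of variables $y = r + j - 1$ and the fact that the intervals $[j-1, j)$ for $j = 1, \dots, k$ tile $[0,k)$. Combining with the transport identity from the first step gives $\mathbb{E}\,|\I(A_1) - \I(A_2)| = |A_1 - A_2|$.

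I expect the main obstacle to be the justification that the layered/quantile expression actually equals the $k$-server movement cost $|A_1 - A_2|$ between fractional configurations — i.e., that shuffling infinitesimal mass optimally on the line never beats the sorted coupling. The clean way to handle this is an exchange argument: any transport plan that moves a piece of mass from $a$ to $b$ and another from $a'$ to $b'$ with $a < a'$ but $b > b'$ can be uncrossed without increasing cost (by the triangle inequality / one-dimensionality), so an optimal plan can be taken monotone, and the unique monotone plan between two mass distributions is exactly the quantile coupling $y \mapsto (f_{A_1}(y), f_{A_2}(y))$. A secondary, purely bookkeeping point is measure-theoretic care with the finitely many discontinuities of $f_{A_i}$ (the points where mass is atomic), but since these form a finite set they contribute zero to all the integrals and can be ignored. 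Once the monotone-coupling fact is in hand, the rest is the routine change of variables already displayed above.
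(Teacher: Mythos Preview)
Your proposal is correct and follows essentially the same route as the paper. The paper's proof simply asserts the identity $|A_1 - A_2| = \int_0^1 |\I_\omega(A_1) - \I_\omega(A_2)|\,d\omega$ (in your notation, $\int_0^k |f_{A_1}(y)-f_{A_2}(y)|\,dy$) and reads off the expectation; you unpack that assertion into its two constituents---the quantile/monotone-coupling formula for the fractional transport cost and the optimality of the sorted matching for the integral configurations---and then do the same change of variables, so you are filling in exactly the details the paper leaves implicit.
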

\begin{proof}
	The key point behind the proof of this lemma is that the distance of two fractional configurations $A_1$ and $A_2$ can be formulated as follows
	\begin{equation*}
		|A_1 - A_2| = \int_0^1 |\I_\omega(A_1) - \I_\omega(A_2)|d_\omega
	\end{equation*}
	where $\I_\omega(A)$ stands for an integral configurations which places the servers on points $f_A(\omega)$, $f_A(\omega+1)$, $f_A(\omega+2)$, $\ldots$, $f_A(\omega+k-1)$. Since at the beginning of the rounding method we draw $r$ uniformly at random, the expected distance of the two rounded configurations is exactly equal to 
	\begin{equation*}
		\int_0^1 |\I_\omega(A_1) - \I_\omega(A_2)|d_\omega
	\end{equation*}
	which is equal to the distance of $A_1$ from $A_2$.
\end{proof}
\begin{theorem}
	For any given fractional online algorithm $\A$ for the $k$-server problem on the line metric, there exists an online integral solution for the same problem with the same expected cost.  
\end{theorem}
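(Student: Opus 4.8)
The plan is to feed the output of the fractional algorithm through the rounding map $\I$ studied in the two preceding lemmas, using a single shared random offset. Let $\A$ be a fractional online algorithm for $k$-server on $\Line$. On a request sequence $\req = \langle r_1,\ldots,r_t\rangle$ it produces fractional configurations $\langle A_0,A_1,\ldots,A_t\rangle$ with $A_i$ serving $r_i$, at cost $\sum_{i=1}^t |A_{i-1}-A_i|$. I would define the integral algorithm as follows: before any request arrives, draw one number $r\in[0,1)$ uniformly at random; at step $i$, after $\A$ has computed $A_i$, emit the integral configuration $\I_r(A_i)$ (the configuration placing servers at $f_{A_i}(r), f_{A_i}(r+1),\ldots,f_{A_i}(r+k-1)$, in the notation of the proof of the distance lemma). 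This is a legitimate online algorithm: $r$ is fixed up front and is independent of the requests, and $\I_r(A_i)$ is a function of $A_i$ and $r$ only, so it can be output as soon as $A_i$ is known.

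Feasibility is immediate from the first lemma: since $A_i$ carries at least one unit of server mass on $r_i$, the configuration $\I_r(A_i)$ has an integral server on $r_i$, so $\langle \I_r(A_0),\ldots,\I_r(A_t)\rangle$ is a valid integral solution for $\req$. For the cost, fix $\req$; by linearity of expectation over the choice of $r$ followed by the distance lemma applied to each consecutive pair,
\[
\ex{\,\sum_{i=1}^t |\I_r(A_{i-1})-\I_r(A_i)|\,} \;=\; \sum_{i=1}^t \ex{\,|\I_r(A_{i-1})-\I_r(A_i)|\,} \;=\; \sum_{i=1}^t |A_{i-1}-A_i|,
\]
which is exactly the cost of $\A$ on $\req$. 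Taking a further expectation over $\req\sim\langle P_1,\ldots,P_t\rangle$ — again valid by linearity since $r$ is independent of the requests — shows the expected cost of the integral algorithm equals that of $\A$. Moreover, since for each fixed $\req$ the cost of the integral algorithm is the average over $r\in[0,1)$ of the cost of $\I_r$, some fixed value of $r$ achieves expected cost (over $\req$) no larger than this average; fixing such an $r$ derandomizes the algorithm in the stochastic model.

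There is essentially no hard step here: all of the work is done by the two lemmas. The only points that need care are (a) that the \emph{same} offset $r$ must be used for every configuration along the sequence — this is what legitimizes the termwise application of the distance lemma and keeps the construction online — and (b) that the dependence of the $A_i$ on the (random) requests is harmless for the linearity-of-expectation argument precisely because $r$ is drawn independently of everything else.
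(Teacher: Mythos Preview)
Your proposal is correct and is exactly the argument the paper intends: the theorem is stated immediately after the two lemmas with no separate proof, and your write-up simply makes explicit the intended combination---draw a single shared offset $r$, apply $\I_r$ to each fractional configuration online, invoke the first lemma for feasibility and the second (with linearity of expectation) for the cost identity, then derandomize by averaging. The two caveats you flag (reusing the same $r$ throughout, and independence of $r$ from the request sequence) are precisely the points the paper stresses in setting up the rounding.
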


\subsection{Reduction for General Graphs}
An HST is a undirected rooted tree in which every leaf represents a point in the metric and the distance of a pair of points in the metric is equal to the distance of the corresponding leaves in the tree. In an HST, weights of the edges are uniquely determined by the depth of the vertices they connect. More precisely, in a $\sigma$-HST the weight of an edges between a vertex $v$ and its children is equal to $\sigma^{h-d_v}$ where $h$ stands for the height of the tree and $d_v$ denotes the depth of vertex $v$.

Since HSTs are very well structured, designing algorithms on HSTs is relatively easier in comparison to a more complex metric. Therefore, a classic method for alleviating the complexity of the problems is to first embed the metrics into HSTs with a low distortion and then solve the problems on these trees. 

Perhaps the most important property of the HSTs is the following:
\begin{observation}
	For every pair of leaves $u,v \in T$ of an HST, the distance of $u$ and $v$ is uniquely determined by the depth of their deepest common ancestor.
\end{observation}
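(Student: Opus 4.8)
The plan is to compute the distance between $u$ and $v$ directly from the tree structure and observe that it collapses to a function of the depth of their deepest common ancestor. Let $w$ be the deepest common ancestor of $u$ and $v$ and let $d_w$ be its depth. In a tree the distance between two vertices equals the length of the unique path joining them, and this path decomposes into the portion running from $u$ up to $w$, followed by the portion running from $w$ down to $v$. I would also make explicit the HST convention that all leaves lie at the same depth $h$ (the height of the tree); this is needed, since if leaves were allowed at varying depths the statement would fail even for a fixed $T$.

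Next I would evaluate the length of the sub-path from $u$ to $w$. This sub-path traverses exactly the edges joining a vertex at depth $j$ to one of its children at depth $j+1$, for $j = d_w, d_w+1, \ldots, h-1$. By the defining property of a $\sigma$-HST each such edge has weight $\sigma^{h-j}$, so the length of the $u$-to-$w$ portion is $\sum_{j=d_w}^{h-1}\sigma^{h-j} = \sum_{i=1}^{h-d_w}\sigma^{i}$. Since $v$ is likewise a leaf at depth $h$, the $w$-to-$v$ portion has exactly the same length, and adding the two halves gives
\[
	d(u,v) = 2\sum_{i=1}^{h-d_w}\sigma^{i}.
\]
On the right-hand side the only quantity depending on the particular pair $u,v$ is $d_w$, because $h$ and $\sigma$ are fixed by $T$; hence $d(u,v)$ is determined by $d_w$, which is the claim.

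I do not expect a genuine obstacle here: the whole content is the bookkeeping of which edges lie on the $u$–$v$ path, together with the equal-depth convention for leaves. The one point worth flagging is that this closed form also makes transparent a monotonicity that gets used later, namely that a deeper common ancestor yields a strictly smaller distance, so in writing it up I would record that consequence alongside the observation itself.
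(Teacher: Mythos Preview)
Your proof is correct and complete. The paper itself treats this as a self-evident observation and offers no proof at all; it merely states the observation and then remarks (without justification) that a deeper common ancestor yields a smaller distance. Your direct computation of the path length---splitting at the deepest common ancestor and summing the geometric sequence of edge weights---is the natural argument, and your explicit flagging of the equal-leaf-depth convention is a point the paper leaves implicit. The monotonicity consequence you propose to record is exactly the remark the paper makes immediately after the observation, so including it is appropriate.
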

Note that, the higher the depth of the common ancestor is, the lower the distance of the leaves will be. Therefore, the closest leaves to a leaf $v$ are the ones that share the most common ancestors with $v$.
Bansal \textit{et al.} propose a method for rounding every fractional solution of the $k$-server problem to an integral solution losing at most a constant factor \cite{bansal2011polylogarithmic}.
\begin{theorem} \cite{bansal2011polylogarithmic} Let $T$ be a $\sigma$-HST with $n$ leaves, $\sigma > 5$, and let $A = \langle A_0, A_1, A_2, \ldots, A_t\rangle$ be a sequence of fractional configurations. There is an online procedure that maintains a sequence of randomized k-server configurations $S = \langle S_0, S_1, S_2, \ldots, S_t \rangle$ satisfying the following two properties: 
	\begin{itemize}
		\item At any time $i$, the state $S_i$ is consistent with the fractional state $A_i$.
		\item If the fractional state changes from $x_{i-1}$ to $x_i$ at time $i$, incurring a movement cost of $c_i$, then the state $S_{i-1}$ can be modified to a state $S_i$ while incurring a cost of $O(c_i)$ in expectation.
	\end{itemize}
\end{theorem}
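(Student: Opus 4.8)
The plan is to maintain, online, a coupling between the fractional configuration $A_i$ and a random integral $k$-server configuration $S_i$, governed by a per-node \emph{rounding invariant}, and to update this coupling incrementally so that a fractional move of cost $c_i$ triggers only $O(c_i)$ expected integral movement. For a node $u$ of the $\sigma$-HST $T$ at depth $d_u$, write $m_u(i)=\sum_{v\text{ a leaf below }u}A_i(v)$ for the fractional mass in the subtree $T_u$ and $N_u(i)$ for the (random) number of servers that $S_i$ places in $T_u$. The invariant is: $N_u(i)\in\{\lfloor m_u(i)\rfloor,\lceil m_u(i)\rceil\}$ with probability one, and $\Pr[N_u(i)=\lceil m_u(i)\rceil]=m_u(i)-\lfloor m_u(i)\rfloor$. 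At the root this forces $N_{\mathrm{root}}=k$, and at a leaf $v$ with $A_i(v)\ge 1$ it forces a server on $v$; this is exactly the consistency with the fractional state required by the theorem.

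First I would check that such a coupled distribution exists, by a top-down sampling procedure: having fixed $N_u$ for a node $u$ with children $v_1,\dots,v_c$, where $\sum_j m_{v_j}=m_u$, one samples integers $N_{v_1},\dots,N_{v_c}$ with $\sum_j N_{v_j}=N_u$ and each $N_{v_j}$ carrying its prescribed floor/ceil marginal. This is a standard dependent-rounding step applied to a fractional point of the integer polytope $\{x:\sum_j x_j=N_u,\ \lfloor m_{v_j}\rfloor\le x_j\le\lceil m_{v_j}\rceil\}$, and iterating it from the root down to the leaves produces a valid $S_i$.

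The core of the proof is the update step, which must be online. When $A_{i-1}$ becomes $A_i$, decompose the fractional movement as a flow supported on the edges of $T$, so that $c_i=\sum_e|f_e|\,w_e$ where $f_e$ is the net fractional mass crossing edge $e$ and $w_e=\sigma^{h-d}$ for $e$ at depth $d$. This fixes the new masses $m_u(i)$, and I would restore the invariant by re-coupling top-down with a \emph{maximal coupling} at each node: given $N_u(i)$ (already decided by the parent's step), resample $(N_{v_1}(i),\dots,N_{v_c}(i))$ from the correct joint law while keeping it equal to $(N_{v_1}(i-1),\dots,N_{v_c}(i-1))$ with the largest possible probability; a server crosses the edge $(u,v_j)$ only when $N_{v_j}$ changes, at cost $w_{(u,v_j)}$. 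The key estimate is that the probability $N_{v_j}$ must change is at most $|f_{(u,v_j)}|$ plus a term inherited from the change of $N_u$ at the parent; unwinding this recursion and weighting depth-$d$ crossings by $\sigma^{h-d}$, the inherited-error contribution is damped by a factor shrinking with $\sigma$ at each level, so $\sigma>5$ makes the total a convergent geometric series summing to $O(c_i)$.

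The main obstacle is precisely this charging: controlling how a forced $\pm1$ change of $N_u$ at a node cascades to its descendants. A careless bound lets one unit of correction at a shallow node spawn corrections at every level below, each cheaper by $\sigma$ but possibly spread over many nodes; the remedy is to arrange the maximal coupling so that the parent's correction is absorbed by \emph{at most one} child (the one whose count must change to keep $\sum_j N_{v_j}=N_u$), turning the inherited process into a single root-to-leaf path whose total weight is a geometric sum $\sum_{d\ge d_0}\sigma^{h-d}=O(\sigma^{h-d_0})$ charged to a fractional flow of comparable weight at depth $d_0$. Making the constants close is where $\sigma>5$ rather than merely $\sigma>1$ is needed, and I would follow the accounting of Bansal et al.~\cite{bansal2011polylogarithmic} to finish.
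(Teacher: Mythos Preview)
The paper does not prove this theorem; it is stated as a cited result from Bansal, Buchbinder, Madry, and Naor~\cite{bansal2011polylogarithmic} and used as a black box. There is therefore no ``paper's own proof'' to compare against.

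Your sketch is a faithful outline of the Bansal et al.\ argument: the floor/ceil invariant on subtree masses, top-down dependent rounding to realize it, and an incremental maximal-coupling update whose cascaded corrections are damped geometrically by the edge-weight ratio $\sigma$. One point deserves care. You assert that under the maximal coupling a forced $\pm 1$ change at a node is absorbed by \emph{at most one} child, so the inherited correction is a single root-to-leaf path. That is the desired picture, but it is not automatic from ``maximal coupling'' alone: with more than two children, a change in $N_u$ together with simultaneous changes in several $m_{v_j}$ can force more than one child's count to move in a single step. The Bansal et al.\ accounting handles this by processing the fractional flow as a sequence of elementary moves (each shifting an infinitesimal amount of mass across a single edge) and by charging each level-$d$ integral crossing either to the fractional flow at level $d$ or to a single inherited crossing at level $d-1$; summing over levels yields a geometric series in $1/\sigma$, and the constant in $O(c_i)$ is where the threshold on $\sigma$ enters. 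If you make the ``one child absorbs the parent's correction'' step precise in this elementary-move decomposition, your proposal goes through and matches the original proof.
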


Embedding general metrics into trees and in particular HSTs has been the subject of many studies. The seminal work of Fakcharoenphol \textit{et al.}~\cite{fakcharoenphol2003tight} has shown that any metric can be randomly embedded to $\sigma$-HSTs with distortion $O(\frac{\sigma \log n}{\log \sigma})$. 
\begin{theorem}\cite{fakcharoenphol2003tight}
	There exists a probabilistic method to embed an arbitrary metric $\metric$ into $\sigma$-HSTs with distortion $\frac{\sigma \log n}{\log \sigma}$. 
\end{theorem}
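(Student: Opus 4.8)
The plan is to prove the Fakcharoenphol--Rao--Talwar bound by exhibiting an explicit distribution over $\sigma$-HSTs built from a randomized hierarchical decomposition of $\metric$, and then to control separately the contraction (which will vanish after a fixed rescaling) and the expected expansion of each pair of points. First I would normalize: rescale the metric so the minimum interpoint distance exceeds $1$, and let $\delta$ be the least integer with $\operatorname{diam}(\metric)\le\sigma^{\delta}$, so the relevant \emph{scales} are $s=0,1,\dots,\delta$; the distortion we prove will not depend on $\delta$.

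Next I would describe the random tree. Draw a uniformly random permutation $\pi=(\pi_1,\dots,\pi_n)$ of the $n$ points and an independent multiplier $\beta\in[1,\sigma)$ whose logarithm $\log_\sigma\beta$ is uniform on $[0,1)$; set $r_s=\beta\,\sigma^{s-1}$, so $r_s$ is ``log-uniform'' over $[\sigma^{s-1},\sigma^{s})$, and take $r_\delta\ge\operatorname{diam}(\metric)$. For each point $u$ and scale $s$ let the \emph{scale-$s$ leader} $c_s(u)$ be $\pi_j$ for the smallest $j$ with $d(u,\pi_j)\le r_s$, and declare $u,v$ to be in the same scale-$s$ cluster iff $c_{s'}(u)=c_{s'}(v)$ for all $s'\ge s$. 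Since this is an intersection of conditions over $s'\ge s$, the partitions refine one another as $s$ decreases, hence form a laminar family and a rooted tree $T$: one node per cluster, the root the (single) scale-$\delta$ cluster, the leaves the points, and the edge from a scale-$s$ cluster to its scale-$(s{+}1)$ parent given weight $c\,\sigma^{s}$ for a universal constant $c$ fixed next; consecutive edge weights differ by a factor $\sigma$, so $T$ is a $\sigma$-HST.

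It then remains to verify two claims. For \emph{domination}: if $u,v$ are together at scale $s{+}1$ they share a leader $w$ with $d(u,w),d(v,w)\le r_{s+1}<\sigma^{s+1}$, so $d(u,v)<2\sigma^{s+1}$, while their tree distance once first separated at scale $s$ is a geometric sum $2c(\sigma^0+\dots+\sigma^{s})=\Theta(c\,\sigma^{s})$; choosing $c$ to be a suitable multiple of $\sigma$ forces $d_T(u,v)\ge d(u,v)$ for every outcome of $(\pi,\beta)$. For \emph{expected expansion}: writing $\operatorname{sep}(u,v)$ for the random coarsest scale at which $u,v$ are separated, $d_T(u,v)=\Theta(\sigma\cdot\sigma^{\operatorname{sep}(u,v)})$, so it suffices to bound $\E[\sigma^{\operatorname{sep}(u,v)}]$. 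Fix $u,v$ and sort the points as $w_1,\dots,w_n$ by increasing $\min\{d(w_j,u),d(w_j,v)\}$. A scale-$s$ separation of $u,v$ can be charged to the leader $w_j=c_s(u)$ (or $c_s(v)$), which requires both that $w_j$ precede $w_1,\dots,w_{j-1}$ in $\pi$ --- probability $1/j$, independent of $\beta$ --- and that $r_s$ land in an interval of width at most $d(u,v)$ (so $w_j$ is within $r_s$ of exactly one of $u,v$); as $\log_\sigma\beta$ is uniform, summing $\sigma^{s}\Pr[\cdot]$ over the scales whose ranges meet that interval telescopes to $O(\sigma\,d(u,v)/\log\sigma)$ for each $j$. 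Summing $1/j$ over $j=1,\dots,n$ gives $\E[\sigma^{\operatorname{sep}(u,v)}]=O\!\big(\sigma\,d(u,v)\log n/\log\sigma\big)$, and together with domination this yields $d(u,v)\le d_T(u,v)$ and $\E[d_T(u,v)]\le O(\sigma\log n/\log\sigma)\,d(u,v)$, the claimed distortion.

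The step I expect to be the main obstacle is the expansion estimate: one must pin down precisely the event ``$u,v$ separated at scale $s$'', attribute it to a single responsible point $w_j$, and then decouple the permutation randomness (the $1/j$ terms, hence the harmonic sum and the $\log n$) from the radius randomness (the $d(u,v)/(r_s\log\sigma)$ factor, where the geometric spacing of scales together with the log-uniform $\beta$ produce the $1/\log\sigma$ saving). Keeping the $\sigma$-dependence to a single power, rather than $\sigma^2$, also requires care about how the domination rescaling interacts with the $\Theta(\sigma\cdot\sigma^{\operatorname{sep}})$ form of $d_T$. The remaining pieces --- laminarity of the partitions and the $\sigma$-HST property --- are routine bookkeeping.
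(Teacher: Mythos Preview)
The paper does not prove this statement at all: it is quoted verbatim as a result of Fakcharoenphol, Rao, and Talwar and used as a black box, so there is no ``paper's own proof'' to compare against. What you have written is, in essence, the standard FRT construction and analysis (random permutation for the leader order, log-uniform random radii at geometrically spaced scales, laminar decomposition into a $\sigma$-HST, then the harmonic-sum charging argument for the expected stretch), and the sketch is correct at the level of a proof plan.

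Two remarks. First, your acknowledged worry about $\sigma$ versus $\sigma^2$ is real: with a \emph{fixed} edge-weight constant $c=\Theta(\sigma)$ (needed for almost-sure domination, since $d(u,v)<2r_{\operatorname{sep}+1}<2\sigma^{\operatorname{sep}+1}$) your bound $\E[\sigma^{\operatorname{sep}}]=O(\sigma\,d(u,v)\log n/\log\sigma)$ yields $\E[d_T(u,v)]=O(\sigma^2\log n/\log\sigma)\,d(u,v)$. One clean way to recover a single power of $\sigma$ is to let the edge weights depend on the random shift, taking $c=\beta$, so that $d_T(u,v)\ge 2\beta\sigma^{\operatorname{sep}}=2r_{\operatorname{sep}+1}>d(u,v)$ still holds almost surely, while
\[
\sum_s \sigma^s\,\E\!\big[\beta\cdot\mathbf{1}[r_s\in[a,b)]\big]
=\frac{\sigma}{\ln\sigma}\,(b-a),
\]
which removes the extra $\sigma$. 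For the paper's purposes this refinement is immaterial anyway, since they only instantiate $\sigma=6$ and absorb everything into the $O(\log n)$.

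Second, a small boundary detail: with $r_\delta=\beta\sigma^{\delta-1}$ one does not automatically have $r_\delta\ge\operatorname{diam}(\metric)$; you should either take $\delta$ one larger or set the top-level radius deterministically. This is cosmetic and does not affect the analysis.
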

Therefore, to round a fractional solution on a general metric, we first embed it into $6$-HSTs with a distortion of at most $O(\log n)$ and then round the solution while losing only a constant factor. This will give us an integral algorithm that has an expected cost of at most $O(\log n)$ times the optimal.

\begin{theorem}
	For any given fractional online algorithm $\A$ for the $k$-server problem on an arbitrary metric, there exists an online integral solution for the same problem having a cost of no worse that $O(\log n)$ times the cost of $\A$ in expectation.  
\end{theorem}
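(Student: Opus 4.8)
The plan is to derive this as a composition of the two black boxes quoted immediately above: the randomized tree embedding of Fakcharoenphol, Rao, and Talwar~\cite{fakcharoenphol2003tight}, and the constant-factor HST rounding of Bansal \emph{et al.}~\cite{bansal2011polylogarithmic}. First I would fix $\sigma = 6 > 5$ and invoke the embedding theorem to obtain a distribution $\mathcal{D}$ over $6$-HSTs $T$ whose leaf set is the point set of $\metric$, such that $d_\metric(x,y) \le d_T(x,y)$ for every pair of points while $\mathbb{E}_{T\sim\mathcal{D}}[d_T(x,y)] \le O(\log n)\, d_\metric(x,y)$. The one-sided (non-contractive) form of the embedding is exactly what makes the two reductions below clean. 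Crucially, $\mathcal{D}$ can be sampled once, before any request arrives, so using it does not violate onlineness.

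Next I would lift the given fractional algorithm $\A$ to the tree. A fractional configuration on $\metric$ is an assignment of server mass to points, i.e.\ to leaves of $T$, so it is simultaneously a fractional configuration on $T$; and an infinitesimal transfer of mass from $u$ to $v$ that costs $d_\metric(u,v)$ in $\metric$ costs $d_T(u,v)$ on $T$. Hence, for every request sequence $\sigma$, the lifted fractional algorithm $\A^T$ on $T$ has cost obtained from $\cost{\A(\sigma)}$ by replacing every $d_\metric(\cdot,\cdot)$ appearing in the movement terms and in the serving terms $2\,d(B_i,r_i)$ by the corresponding $d_T(\cdot,\cdot)$. Taking expectation over $T$, linearity of expectation together with the expected-stretch bound gives $\mathbb{E}_{T\sim\mathcal{D}}\big[\cost{\A^T(\sigma)}\big] \le O(\log n)\,\cost{\A(\sigma)}$.

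Then I would feed $\A^T$ into the Bansal \emph{et al.}\ HST rounding procedure with $\sigma = 6$, obtaining an online randomized integral $k$-server algorithm on $T$ whose cost, in expectation over its internal randomness, is $O(1)\cdot\cost{\A^T(\sigma)}$ on every $\sigma$. Finally I would read this integral algorithm back on $\metric$: an integral placement of $k$ servers on leaves of $T$ is an integral configuration on $\metric$, and since $d_\metric \le d_T$ pointwise, every server movement and every serving step costs no more in $\metric$ than it did on $T$. Composing the two inequalities, the overall algorithm --- sample $T\sim\mathcal{D}$, lift $\A$ to $\A^T$, and run the Bansal rounding --- is online and satisfies $\mathbb{E}\big[\cost{\ALG(\sigma)}\big] \le O(1)\cdot O(\log n)\cdot\cost{\A(\sigma)} = O(\log n)\,\cost{\A(\sigma)}$ for every $\sigma$, the expectation being over $\mathcal{D}$ and the rounding's randomness; averaging over $\sigma \sim \langle P_1,\dots,P_t\rangle$ yields the stated bound.

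The only point requiring care is keeping the two distortion inequalities pointed in the right directions --- expected \emph{stretch} $O(\log n)$ when pushing a fractional solution from $\metric$ up to $T$, and no loss at all when pulling an integral solution from $T$ back down to $\metric$ --- and checking that the serving terms $2\,d(B_i,r_i)$ transform exactly like movement terms under both reductions. Since both the embedding and the Bansal procedure are themselves online, no additional argument is needed to preserve onlineness, so the main obstacle here is bookkeeping rather than any new idea.
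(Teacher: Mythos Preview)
Your proposal is correct and follows essentially the same approach as the paper: embed the metric into a random $6$-HST via~\cite{fakcharoenphol2003tight}, lift the fractional algorithm to the tree, round it there using the Bansal \emph{et al.}~\cite{bansal2011polylogarithmic} procedure, and pull the integral solution back to $\metric$ using non-contractiveness. The paper's own argument is a two-sentence sketch of this composition; your version is more explicit about the directions of the two distortion inequalities and about why onlineness is preserved, but there is no substantive difference in the route taken.
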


\section{Acknowledgment}
We would like to thank Shi Li for having helpful discussions.

\bibliographystyle{abbrv}
\bibliography{k-server.bib}
\newpage

\appendix

\section{Correlated Setting}\label{sec:correlated}

In this section, we study the $k$-server problem when the probability distributions are not independent. Recall that in the independent setting the sequence of requests is referred to by $\req=\langle r_1,\ldots,r_t\rangle$. In the correlated model we assume all different possibilities for $\req$ have been given in the form of a set $\R=\{\rho_1,\ldots,\rho_m\}$ of $m$ sequences $\rho_i=\langle r_{i, 1},\ldots, r_{i,t} \rangle$. Moreover, we assume the probability of each scenario $\rho_i$ is denoted by $p_i$ and given in advance. Given the list of different scenarios and probabilities, the goal is to design an online algorithm to serve each request $r_{i,j}$ prior to arrival of the next request such that the overall movement of the servers is minimized.

We model this problem by an integer program. We first write an integer program and show that every solution of this program is uniquely mapped to a deterministic online algorithms for the problem. Moreover, every online algorithm can be mapped to a feasible solution of the program. More precisely, each solution of the program is equivalent to an online algorithm of the problem. Furthermore, we show how to derive an online algorithm from the solution of the integer program. These two imply that the optimal deterministic online algorithm can be obtained from the optimal solution of the program. 


\subsection{Program}

To better convey the idea behind the integer program, we first introduce the tree $T$ which is a trie containing all sequences $\rho_1$ to $\rho_m$. Let us use $w(v)$ to denote the path from the root to a node $v$. With these notations, a node $v\in T$ represents a request which may occur conditioning all requests in $w(v)$ occur beforehand. Besides, every leaf of $T$ uniquely represents one of the $\rho_i$'s. Let us use $l(v)$ to denote the set of those indices $i$ for which $\rho_i$ is a leaf of the subtree of $v$. At each step $t$, only those $\rho_i$'s can be a final option for $R$ that $\langle \rho_{i,1},\ldots,\rho_{i,t} \rangle=\langle r_1,\ldots,r_t\rangle$. Hence, a new request $r_t$ can be informative since we know that none of the $\rho_i$'s in $l(r_{t-1})\backslash l(r_tau)$ will occur anymore. For a node $v$ we define $Pr(v)$ as the probability of all requests in $w(v)$ happening i.e. $Pr(v)=\sum_{i\in l(v)} Pr(R=\rho_i)$.


We extend the tree $T$ by adding $k-1$ additional nodes. As shown in Figure 1, these nodes form a path leading to the root of $T$. These nodes plus the root represent the initial configuration of the $k$ servers. Let us call these nodes the initial set $I$. Now we can show the movement of the servers in our metric space by means of $k$ tokens in $T$. To do so, we begin with putting one token on each of the $k$ nodes of $I$. Each token corresponds to one of the servers. After a server moves to serve a request $r_t$, we move its corresponding token to a node of $T$ which represents the request $r_t$. Note that at this step, there is no discrimination between any of the sequences in $l(r_t)$ in terms of occurrence. This causes a deterministic online algorithm $A$ to serve the first $|w(r_t)|$ requests of $R$ in the same way if $R$ is going to be one of $\rho_i$'s $(i\in l(r_t))$. A result of this uniquely serving is that we can use some downward links on $T$ in order to show how each request $v$ gets served. In the next paragraphs we explain about these links and how we construct the integer program.

Let us use $x_{u,v}$ to denote a link from a node $u\in T$ to its descendant $v$. $x_{u,v}$ is one if and only if $A$ uses the same server to serve $u$ and then $v$ without using that server to serve any other request between $u$ and $v$. This consecutive serving may occur with probability $Pr(v)=Pr(u)Pr(v|u)$. In this case, the algorithm moves a server from $u$ to $v$ and pays $|u-v|$ as the distance cost between the two points of the metric space corresponding to $u$ and $v$.

There are two conditions for these links that we must care about. First, since each request $v$ should be served with a server, at least one of the $x_{u,v}$'s should be one for all $u$ in $w(v)$. Without loss of generality, we assume this is exactly one of them, i.e. there is no need to serve a request with more than one server. Second, after serving a request $u$, a server can go for serving at most one other request. That is, for each $i\in l(u)$, there should be at most one $v\in \rho_i$ such that $x_{u,v}=1$. This condition guarantees that in serving the sequence of requests $R$, a server which serves $r_{t_1}\in R$ has always at most one other request $r_{t_2}\in R$ as the next serving request.

The following integer program maintains both conditions for $x_{u,v}$'s and has the expected overall movement of all servers as the objective function:

\begin{equation}
\begin{aligned}\nonumber
& \text{min.} \hspace{1cm}    && \sum_{u,v\in T;u\in w(v)} Pr(v) |u-v| x_{u,v} \\
&\forall v\in T\backslash I	  && \sum_{u\in w(v)} x_{u,v} = 1. \\
&\forall u\in T, i\in l(u)	  && \sum_{v\in \rho_i} x_{u,v} \leq 1.\\
&\forall u,v \in T, u\in w(v) && 	x_{u,v} \in \{0,1\}\\ 
\end{aligned}
\end{equation}

Next, we can relax the constraints of the program to make it linear. Therefore, instead of assigning either $\{0\}$ or $\{1\}$, to each $x_{u,v}$ we let it be a real number between 0 and 1. Thus, the integer program turns to the following linear program with the same objective function but more relaxed constraints.

\begin{equation}
\begin{aligned}\nonumber
& \text{min.} \hspace{1cm}    && \sum_{u,v\in T;u\in w(v)} Pr(v) |u-v| x_{u,v} \\
&\forall v\in T\backslash I	  && \sum_{u\in w(v)} x_{u,v} = 1. \\
&\forall u\in T, i\in l(u)	  && \sum_{v\in \rho_i} x_{u,v} \leq 1.\\
&\forall u,v \in T, u\in w(v) && 	x_{u,v} \leq 1\\ 
&\forall u,v \in T, u\in w(v) && 	x_{u,v} \geq 0\\ 
\end{aligned}
\end{equation}
Note that every feasible solution of the linear program is corresponding to a fractional solution of the problem. Since the optimal solution of the linear program can be found in polynomial time, using the rounding methods presented in Section \ref{fred} we obtain an optimal online algorithm for the line metric and a $O(\log n)$ approximation algorithm for general metrics as stated in Theorem~\ref{thm:correalted}.

\section{Experimental Results}
The goal of this section is to make an evaluation of our method for the line on a real world data set. The line can be an appropriate model for a plenty of applications. For example, it could be sending road maintenance trucks to different points of a road or sending emergency vehicles to accident scenes along a highway. For this experiment, we take the case of car accidents.

\textbf{Data sets.} We use Road Safety Data\footnote{https://data.gov.uk/dataset/road-accidents-safety-data/} to find the distribution of the accidents along the A1\footnote{https://en.wikipedia.org/wiki/A1\_road\_(Great\_Britain)} road in Great Britain. In 2015, over 1600 accidents occurred on this highway, with an average of 140 accidents per month. We assume a point every 10 miles along the highway. That is 40 points in total. Then we build the distributions with respect to how the accidents are spread over the days of month. In this way, we achieve 30 distributions for 40 points along the line.

\textbf{Algorithms.} We compare the performance of our method to that of the optimum algorithm. To find the optimum solution we use backtracking. The running time of the algorithm is exponential to $k$. However, we use techniques such as \textit{branch and bound} and \textit{exponential dynamic programming} to get a fast implementation.


\textbf{Results.} We run different experiments with $k$ from 2 to 11 on the line and distributions explained above. In previous sections we showed an upper bound of 3 for the approximation factor of our algorithm. Interestingly, in these experiments we can observe a better performance as shown by Figure \ref{fig}. We compare the running time of the algorithms in Table \ref{table}. Note that the size of our LP our method solvers does not vary by $k$. This is in fact the reason behind why its running time remains almost the same. In contrast, the running time of the optimum algorithm grows exponentially.

\begin{figure}[h]{}
	\centering
	\includegraphics[scale=0.5]{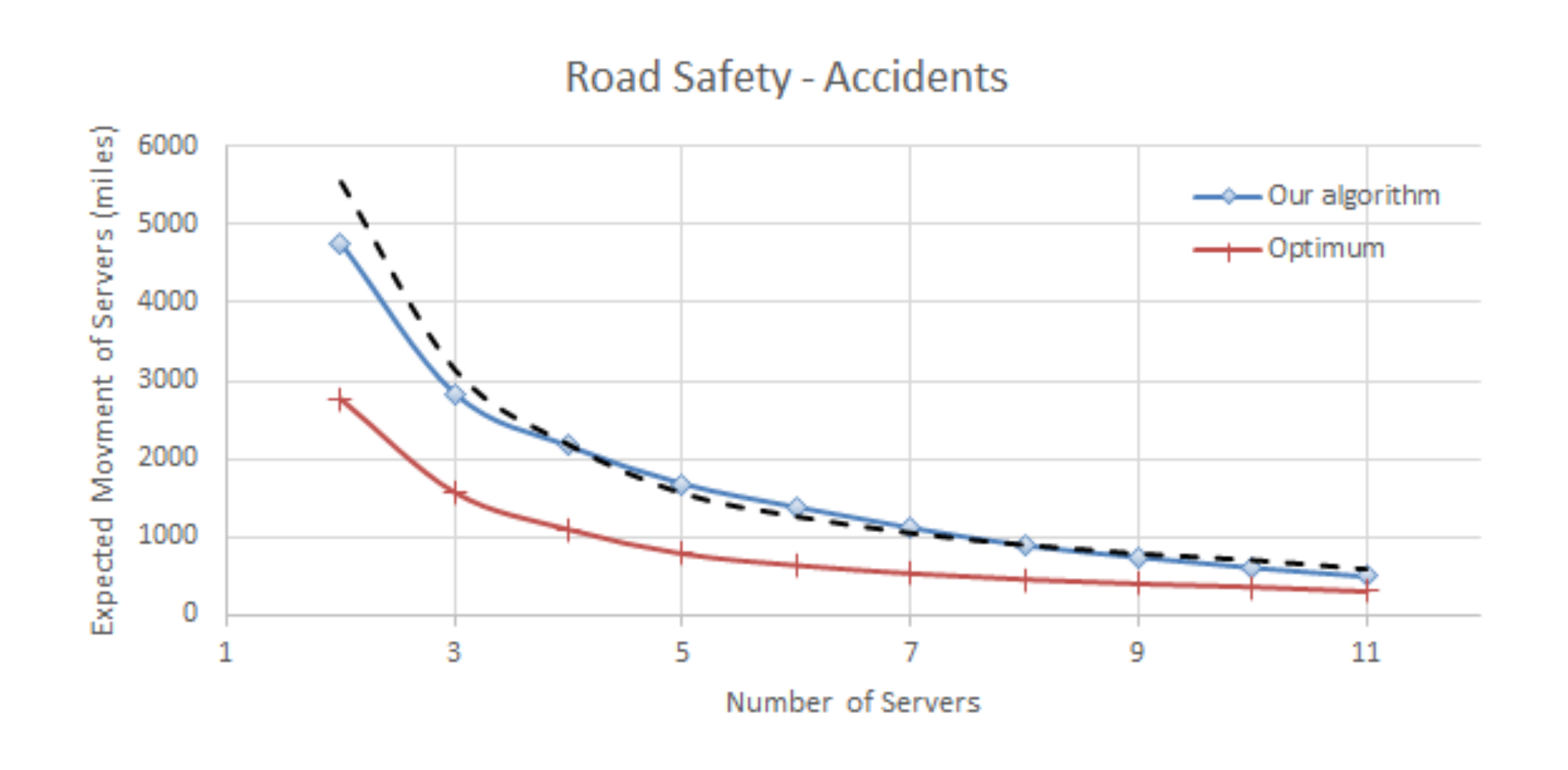}
	\caption{Performance of our algorithm compared to the optimum. The dashed curve indicates two times the optimum.}
	\label{fig}
\end{figure}

\begin{table}
	\begin{tabular}{|l|l|l|l|l|l|l|l|l|l|l|}
		\hline
		Number of Servers & 2 & 3 & 4 & 5 & 6 & 7 & 8 & 9 & 10 & 11 \\ \hline
		Algorithm & 6.5 & 7.6 & 6.7 & 7.1 & 7.5 & 8.3 & 8.5 & 8.4 & 9.3 & 8.2  \\ \hline
		Optimum & 0.2 & 0.8 & 3.1 & 8.4 & 29.4 & 57.9 & 126.3 & 406.7 & 1477.1 & 6173.6  \\ \hline
	\end{tabular}
	\caption{The running time of our algorithm and the optimum algorithm in seconds. For higher number of servers, the optimum solution was not calculable within 5 hours.}
	\label{table}
\end{table}

\end{document}